\newtheorem{theorem}{Theorem}
\newtheorem{lemma}{Lemma}
\newtheorem{proposition}{Proposition}
\newcommand{\eps}{\varepsilon}
\newcommand{\NN}{\mathbb{N}} %  set of natural numbers
\def\A{\mathcal{A}}
\def\idle{{\tt idle}}
\newcommand{\etal}{{et~al.}}
\newcommand{\ie}{{i.e.}}
\newcommand{\eg}{{e.g.}}
\newcommand{\old}[1]{}
\title{On Fence Patrolling by Mobile Agents\thanks{A preliminary
    version of this paper appeared in the {\em Proceedings of the 25th
      Canadian Conference on Computational Geometry}, Waterloo, ON,
    Canada, August 2013.
Research supported in part by the NSF grant DMS-1001667.}}
\author{%
Adrian Dumitrescu\thanks{%
Department of Computer Science,
University of Wisconsin--Milwaukee, USA\@.
Email:~\texttt{dumitres@uwm.edu}.}
\and
Anirban Ghosh\thanks{%
Department of Computer Science,
University of Wisconsin--Milwaukee, USA\@.
Email:~\texttt{anirban@uwm.edu}.}
\and
Csaba D. T\'oth\thanks{%
Department of Mathematics,
California State University, Northridge, Los Angeles, USA\@.
Email:  \texttt{cdtoth@acm.org}.}
}
\begin{document}

\maketitle

\begin{abstract}
Suppose that a fence needs to be protected (perpetually) by $k$ mobile
agents with maximum speeds $v_1,\ldots,v_k$ so that no point on the
fence is left unattended for more than a given amount of time. The
problem is to determine if this requirement can be met, and if so, to design a
suitable patrolling schedule for the agents.
Alternatively, one would like to find a schedule that minimizes the
\emph{idle time}, that is, the longest time interval during which
some point is not visited by any agent. We revisit this problem,
introduced by Czyzowicz et al.~(2011), and discuss several strategies
for the cases where the fence is an open and a closed curve, respectively.

In particular:
(i) we disprove a conjecture by Czyzowicz et al. regarding the
optimality of their Algorithm~$\A_2$ for unidirectional patrolling of
a closed fence;
(ii) we present an algorithm with a lower idle time for patrolling
an open fence, improving an earlier result of Kawamura and Kobayashi.

\medskip
\textbf{\small Keywords}:
mobile agents,
fence patrolling,
idle time,
approximation algorithm.

\end{abstract}

\section{Introduction} \label{sec:intro}

A set of $k$ mobile agents with (possibly distinct) maximum speeds
$v_i$ ($i=1,\ldots,k$)  are in charge of \emph{guarding} or in
other words \emph{patrolling} a given region of interest.
Patrolling problems find applications in the field of robotics where
surveillance of a region is necessary.
An interesting one-dimensional variant have been introduced by
Czyzowicz~\etal~\cite{CGKK11}, where the agents move along a
rectifiable Jordan curve representing a \emph{fence}. The fence is
either a \emph{closed} curve (the boundary of a compact region in the
plane), or an \emph{open} curve (the boundary between two regions). For simplicity
(and without loss of generality) it can be assumed that the open curve is a
line segment and the closed curve is a circle. The movement of the agents
over the time interval $[0,\infty)$ is described by a \emph{patrolling schedule},
where the speed of the $i$th agent, $a_i$ ($i=1,\ldots,k$), may vary between
zero and its maximum value $v_i$ in any of the two moving directions
(left or right).

Given a closed or open fence of length $\ell$ and maximum speeds
$v_1,\ldots,v_k>0$ of $k$ agents, the goal is to find a
\emph{patrolling schedule} that minimizes the \emph{idle time} $I$,
defined as the longest time interval in $[0,\infty)$ during which a
  point on the fence remains unvisited, taken over all points. A
  straightforward volume argument~\cite{CGKK11} yields the lower bound
  $I \geq \ell/\sum_{i=1}^{k} v_i$ for an (open or closed) fence of length~$\ell$.
A \emph{patrolling algorithm} computes a \emph{patrolling schedule} 
for a given fence and set of speeds $v_1,\ldots,v_k>0$.

For an open fence (line segment), Czyzowicz~\etal~\cite{CGKK11}
proposed a simple partitioning strategy, algorithm $\A_1$,
where each agent moves back and forth perpetually in a segment whose length
is proportional with its speed. Specifically, for a segment of
length $\ell$ and $k$ agents with maximum speeds $v_1,\ldots ,v_k$,
algorithm $\A_1$ partitions the segment into $k$ pieces of lengths
$\ell v_i/\sum_{j=1}^kv_j$, and schedules the $i$th agent to patrol
the $i$th interval with speed $v_i$.
Algorithm $\A_1$ has been proved to be optimal for uniform
speeds~\cite{CGKK11}, \ie, when all maximum speeds are equal.
Algorithm $\A_1$ achieves an idle time $2\ell/\sum_{i=1}^{k} v_i$
on a segment of length $\ell$, and so $\A_1$ is a 2-approximation
algorithm for the shortest idle time.
It has been conjectured~\cite[Conjecture~1]{CGKK11} that $\A_1$ is optimal
for arbitrary speeds, however this was disproved by Kawamura and
Kobayashi~\cite{KK12}: they selected speeds $v_1,\ldots,v_6$
and constructed a schedule for $6$ agents that achieves an idle time of
$\frac{41}{42} \, \left(2\ell/\sum_{i=1}^{k} v_i \right)$.

A patrolling algorithm $\A$ is \emph{universal} if it can be executed
with any number of agents $k$ and any speed setting $v_1,\ldots,v_k>0$
for the agents. For example, $\A_1$ above is universal, however certain algorithms
(\eg, algorithm $\A_3$ in Section~\ref{sec:2/3} or the algorithm in
Section~\ref{sec:24/25}) can only be executed with certain speed
settings or number of agents, \ie, they are not universal.

For the closed fence (circle), no universal algorithm has been
proposed to be optimal. For uniform speeds (\ie, $v_1=\ldots=v_k=v$), 
it is not difficult to see that placing the agents uniformly around the
circle and letting them move in the same direction yields the shortest idle time.
Indeed, the idle time in this case is $\ell/(kv)= \ell/\sum_{i=1}^{k} v_i$,
matching the lower bound mentioned earlier.

For the variant in which all agents are \emph{required} to move in the
same direction along a circle of unit length (say clockwise),
Czyzowicz~\etal~\cite[Conjecture~2]{CGKK11} conjectured that the
following algorithm $\A_2$ always yields an optimal schedule.
Label the agents so that $v_1 \geq v_2 \geq \ldots \geq v_k>0$.
Let $r$, $1\leq r\leq k$, be an index such that $\max_{1 \leq i \leq k} i v_i =r v_r$.
Place the agents at equal distances of $1/r$ around the circle,
so that each moves clockwise at the same speed $v_r$. Discard the
remaining agents, if any. Since all agents move in the same direction,
we also refer to $\A_2$ as the ``runners'' algorithm. It achieves an
idle time of $1/\max_{1 \leq i \leq k} i v_i$~\cite[Theorem~2]{CGKK11}.
Observe that $\A_2$ is also universal. 

\paragraph{Historical perspective.} 
Multi-agent patrolling is a variation of the fundamental problem 
of multi-robot coverage~\cite{Ch04,Ch01}, studied extensively in the 
robotics community. A variety of models has been considered for 
patrolling, including deterministic and randomized, as well as  
centralized and distributed strategies, under various 
objectives~\cite{AKK08,EAK07}. \emph{Idleness}, as a measure of 
efficiency for a patrolling strategy, was introduced by 
Machado~\etal~\cite{MRZD02} in a graph setting; see also the article
by Chevaleyre~\cite{Ch04}.

The closed fence patrolling problem is reminiscent of the classical 
\emph{lonely runners conjecture}, introduced by Wills~\cite{Wi67} and
Cusick~\cite{Cu73}, independently, in number theory and discrete
geometry. Assume that $k$ agents run clockwise along a circle of
length $1$, starting from the same point at time $t=0$. 
They have distinct but constant speeds (the speeds cannot vary, unlike in the 
model considered in this paper). A runner is called \emph{lonely} when he/she is 
at distance of at least $\frac{1}{k}$ from any other runner (along the circle). 
The conjecture asserts that each runner $a_i$ is lonely at some time
$t_i\in (0,\infty)$. The conjecture has only been confirmed for up to
$k=7$ runners~\cite{BS08,BHK01}. 

\paragraph{Notation and terminology.}
A \emph{unit} circle is a circle of unit length.
We parameterize a line segment and a circle of length $\ell$
by the interval $[0,\ell]$.
A \emph{schedule} of $k$ agents consists of $k$ functions
$f_i:[0,\infty] \rightarrow [0,\ell]$, for 
$i=1,\ldots , k$, where $f_i(t)$ is the position of agent
$i$ at time $t$. Each function $f_i$ is continuous
(for a closed fence, the endpoints of the interval $[0,\ell]$ are
identified), it is piecewise differentiable, and its derivative
(speed) is bounded by $|f_i'|\leq v_i$.
A schedule is called \emph{periodic} with \emph{period} $t_0>0$
if $f_i(t)=f_i(t+t_0)$ for all $i=1,\ldots , k$ and
$t\geq 0$. The \emph{idle time} $I$ of a schedule
is the length of the maximum (open) time interval
$(t_1,t_2)$ such that there is a point $x\in [0,\ell]$
where $f_i(t)\neq x$ for all $i=1,\ldots , k$ and
$t\in (t_1,t_2)$. For a given fence (closed or open)
of length $\ell$ and given maximum speeds $v_1,\ldots,v_k$,
$\idle(\A)$ denotes the idle time of a schedule produced by
algorithm $\A$.

We use \emph{position-time diagrams} to plot the agent trajectories
with respect to time. One axis represents the position $f_i(t)$ of the
agents along the fence and the other axis represents time.
In Fig.~\ref{fig:dtd}, for instance, the horizontal axis represents the position
of the agents along the fence and the vertical axis represents time.
In Fig.~\ref{fig:circle}, however, the vertical axis represents the position
and the vertical axis represents time.
A schedule with idle time $I$ is equivalent to a covering problem
in such a diagram (see Fig.~\ref{fig:dtd}). For a straight-line
(\ie, constant speed) trajectory between points $(x_1,y_1)$ and $(x_2,y_2)$
in the diagram, construct a shaded parallelogram with vertices, $(x_1,y_1)$,
$(x_1,y_1+I)$, $(x_2,y_2)$, $(x_2,y_2+I)$, where $I$ denotes the desired idle
time and the shaded region represents the covered region. In particular,
if an agent stays put in a time-interval, the parallelogram degenerates
to a vertical segment. A schedule for the agents ensures idle time $I$
if and only if the entire area of the diagram in the time interval $[I,\infty)$
is covered.

To evaluate the efficiency of a patrolling algorithm $\A$, we use the ratio
$\rho=\idle(\A)/\idle(\A_1)$ between the idle times of $\A$ and
the partition-based algorithm $\A_1$. Lower values of $\rho$ indicate
better (more efficient) algorithms. Recall however that certain
algorithms can only be executed with certain speed settings or number
of agents.
%, \ie, they are not universal.  

We write $H_n= \sum_{i=1}^n 1/i$ for the $n$th \emph{harmonic number}.

\begin{figure}[htbp]
    \begin{center}
          \includegraphics[width=0.25\textwidth]{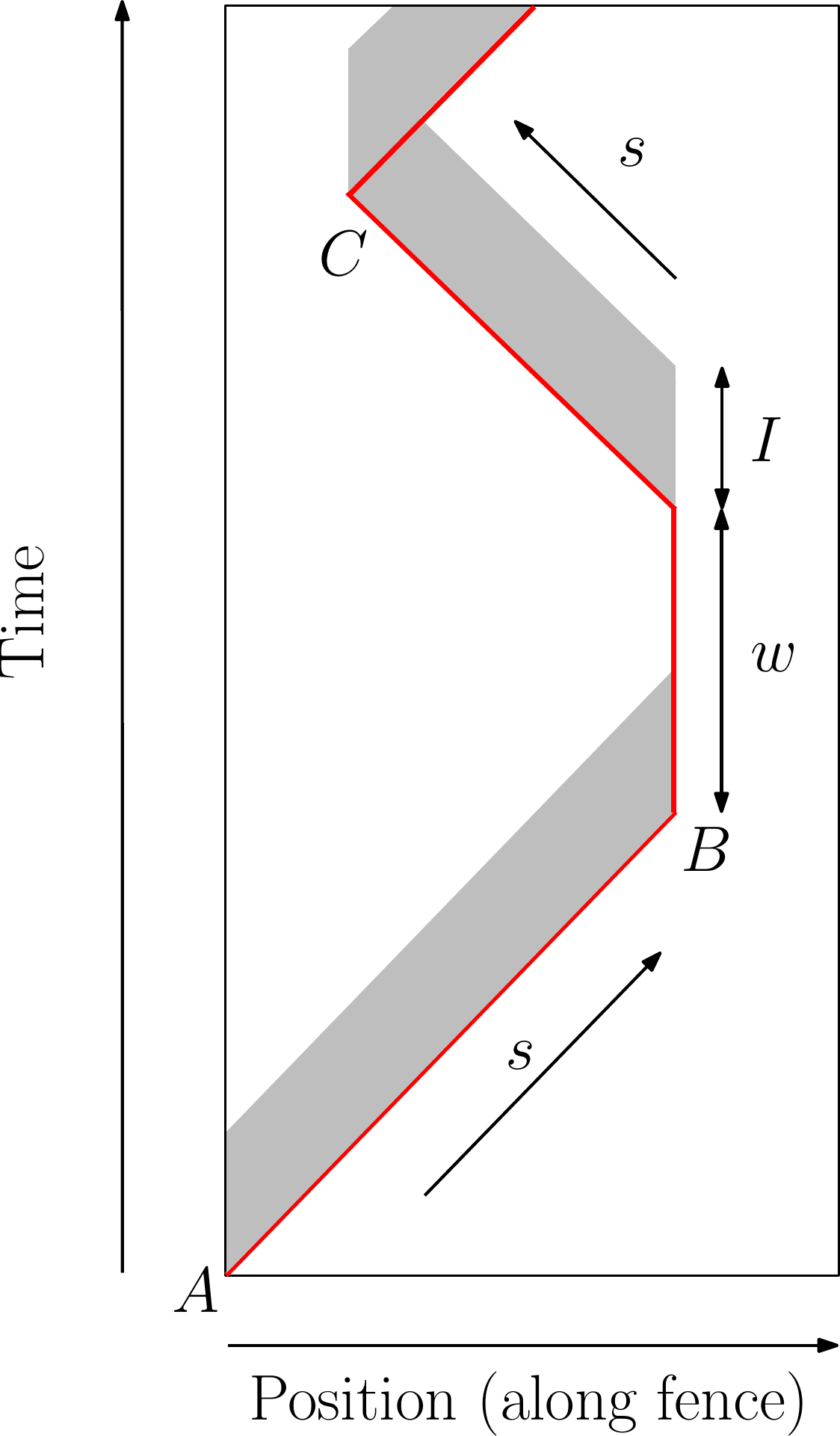}
    \end{center}
\caption{Agent moving with speed $s$ from $A$ to
$B$, waiting at $B$ for time $w$ and then
moving from $B$ to $C$ with speed $s$.}
\label{fig:dtd}
\end{figure}
%

%\vspace{-1.5\baselineskip}

\paragraph{Our results.}
\begin{enumerate}

\item Consider the unidirectional unit circle
(where all agents are required to move in the same direction).

(i) We disprove a conjecture by Czyzowicz~\etal~\cite[Conjecture~2]{CGKK11}
regarding the optimality of Algorithm~$\A_2$. Specifically, we construct
a schedule for $32$ agents with harmonic speeds $v_i=1/i$, $i=1,\ldots,32$,
that has an idle time strictly less than $1$.
In contrast, Algorithm~$\A_2$ yields a unit idle time for harmonic speeds
($\idle(\A_2)=1$), hence it is suboptimal. See Theorem~\ref{thm:counter},
Section~\ref{sec:uni}.

(ii) For every $\tau\in (0,1]$ and $t \geq \tau$, there exists
a positive integer $k=k(t) \leq e^{4t/\tau^2}$ and a schedule for the system
of $k$ agents with harmonic speeds $v_i=1/i$, $i=1,\ldots,k$, that ensures
an idle time at most $\tau$ during the time interval $[0,t]$.
See Theorem~\ref{thm:tau}, Section~\ref{sec:uni}.

\item Consider the open fence patrolling. For every integer $x \geq 2$,
there exist $k=4x+1$ agents with $\sum_{i=1}^{k} v_i=16x+1$ and a guarding
schedule for a segment of length $25x/3$. Alternatively, for every integer
$x \geq 2$ there exist $k=4x+1$ agents with suitable speeds $v_1,\ldots,v_k$,
and a guarding schedule for a unit segment that achieves idle
time at most $\frac{48x+3}{50x} \, \frac{2}{\sum_{i=1}^{k} v_i}$.
In particular, for every $\eps>0$, there exist $k$ agents
with suitable speeds $v_1,\ldots,v_k$, and a guarding schedule for
a unit segment that achieves idle time at most
$\left(\frac{24}{25} +\eps\right) \, \frac{2}{\sum_{i=1}^{k} v_i}$.
This improves the previous bound of $\frac{41}{42} \, \frac{2}{\sum_{i=1}^{k} v_i}$
by Kawamura and Kobayashi~\cite{KK12}. See Theorem~\ref{thm:24/25},
Section~\ref{sec:24/25}.

\item Consider the bidirectional unit circle.

(i) For every $k \geq 4$, there exist maximum speeds
$v_1 \geq v_2 \geq \ldots \geq v_k$ and a new patrolling algorithm $\A_3$
that yields an idle time better than that achieved by both $\A_1$ and $\A_2$.
In particular, for large $k$, the idle time of $\A_3$ with these
speeds is about $2/3$ of that achieved by $\A_1$ and $\A_2$.
See Proposition~\ref{prop:2/3}, Section~\ref{sec:2/3}.

(ii) For every $k \geq 2$, there exist maximum speeds
$v_1 \geq v_2 \geq \ldots \geq v_k$ so that there exists an optimal
  schedule (patrolling algorithm) for the circle that does not use
up to $k-1$ of the agents $a_2,\ldots,a_k$.
In contrast, for a segment, any optimal schedule must use all agents.
See Proposition~\ref{prop:useless}, Section~\ref{sec:2/3}.

(iii) There exist settings in which if all $k$ agents are used
by a patrolling algorithm, then some agent(s) need overtake (pass)
other agent(s). This partially answers a question left open by
Czyzowicz~\etal~\cite[Section~3]{CGKK11}.
See the remark at the end of Section~\ref{sec:2/3}.

\end{enumerate}

\section{Unidirectional Circle Patrolling} \label{sec:uni}

\paragraph{A counterexample for the optimality of algorithm $\A_2$.}
We show that Algorithm~$\A_2$ by Czyzowicz~\etal~\cite{CGKK11}
for unidirectional circle patrolling is not always optimal.
We consider agents with \emph{harmonic speeds} $v_i=1/i$, $i\in \NN$.
Obviously, for this setting we have $\idle(\A_2) =1$, which is
already achieved by the agent $a_1$ with the highest (here unit) speed.
We design a periodic schedule (patrolling algorithm) for $k=32$ agents
with idle time $I<1$. In this schedule, agent $a_1$ moves continuously
with unit speed, and it remains to schedule agents $a_2,\ldots,a_{32}$ such
that every point is visited at least one more time in the unit length
open time interval between two consecutive visits of $a_1$. We start
with a weaker claim, for \emph{closed} intervals but using only $6$ agents.

\begin{lemma}\label{lem:counter}
Consider the unit circle, where all agents are required to move in the
same direction. For k=6 agents of harmonic speeds $v_i=1/i$, $i=1,\ldots ,6$,
there is a schedule where agent $a_1$ moves continuously with speed $1$, and
every point on the circle is visited by some other agent in every
closed unit length time interval between two consecutive visits of $a_1$.
\end{lemma}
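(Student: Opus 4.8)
The plan is to keep $a_1$ moving at constant speed $1$, so that it visits every point of the circle exactly once per unit of time, and to pass to the reference frame that moves with $a_1$. In this frame $a_1$ sits at a fixed relative position, every actual point of the circle drifts backwards at unit rate, and a visit of $a_1$ to a point $x$ corresponds to that point crossing $a_1$'s position. The closed unit interval between two consecutive visits of $a_1$ to $x$ then becomes one full turn of the relative circle, i.e. a diagonal segment of ``slope'' $-1$ in the position-time diagram of the moving frame; the slower agents $a_2,\dots,a_6$, moving clockwise with relative velocities in $[-1,-(i-1)/i]$, trace shallower curves. The requirement of the lemma becomes: choose the trajectories of $a_2,\dots,a_6$ so that every one of these diagonal segments is crossed by at least one of them. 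Here the closed (rather than open) requirement is convenient, since a crossing occurring exactly at an $a_1$-visit is still allowed to count.

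First I would reduce this to a one-dimensional covering problem. Parameterize each diagonal segment (equivalently, each gap to be filled) by the time $t_e$ at which it meets $a_1$. Taking each slower agent at its constant maximum speed $1/i$, a direct computation shows that $a_i$ crosses exactly those segments whose parameter $t_e$ lies in a periodic subset of the real line consisting of intervals of length $\tfrac{1}{i-1}$ separated by gaps of length exactly $1$ (period $\tfrac{i}{i-1}$, hence density $1/i$). The arithmetic fact to record is that the uncovered gap of every agent has the same length $1$, independent of $i$. Summing densities gives $\sum_{i=2}^{6}\tfrac1i=H_6-1\approx 1.45>1$, so there is enough total coverage, with room to spare, to cover the whole parameter axis.

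The remaining, and main, task is to choose the five starting phases $b_2,\dots,b_6$ (the initial positions of the agents) so that the union of these five periodic interval systems leaves no point of the axis uncovered. Since the periods are $2,\tfrac32,\tfrac43,\tfrac54,\tfrac65$, the combined pattern is periodic with period $\mathrm{lcm}(2,3,4,5,6)=60$, so it suffices to exhibit phases and verify that $[0,60)$ is fully covered, a finite check that can be read off a position-time diagram. The natural strategy, which I expect to be the crux, is to stagger the agents: use $a_2$ (on-length $1$) to cover half of the axis, and then place $a_3,\dots,a_6$, whose combined density $\tfrac13+\tfrac14+\tfrac15+\tfrac16=\tfrac{19}{20}$ comfortably exceeds the density $\tfrac12$ of the gaps left by $a_2$, so that each residual gap of length $1$ is swept by the on-intervals of the remaining agents. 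The difficulty is purely one of alignment: density exceeding $1$ is necessary but not sufficient for an exact periodic cover, which is presumably why six agents are used (with only four, the density $\tfrac12+\tfrac13+\tfrac14=\tfrac{13}{12}$ barely exceeds $1$ and leaves too little slack). Exploiting that every agent's gap has length exactly $1$, one chooses each successive phase to bury the previous gaps, and verifies closure over the period $60$.
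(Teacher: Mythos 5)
Your rotating-frame reduction is correct, and it is a genuinely different (and cleaner) way to set up the problem than the paper's: agent $a_i$ run at constant speed $1/i$ does cover exactly a periodic family of parameter intervals of length $\tfrac{1}{i-1}$ separated by gaps of length $1$, with period $\tfrac{i}{i-1}$, and the whole configuration is $60$-periodic. The fatal problem is the step you defer to the end: \emph{no} choice of phases $b_2,\dots,b_6$ makes these five periodic systems cover the line, so the covering you promise to ``verify over the period $60$'' does not exist. To see this, normalize $a_2$'s covered set to $[0,1]+2\ZZ$, so the residual gaps are the open unit intervals $G_k=(2k+1,2k+2)$, and let $T_i^{(k)}\subseteq[0,1]$ be the trace of $a_i$'s covered set on $G_k$. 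Since consecutive gaps are $2$ apart, $T_i^{(k)}$ depends only on $k$ modulo the additive order $m_i$ of $2$ in $\RR/\tfrac{i}{i-1}\ZZ$, and one computes $m_3=3$, $m_4=2$, $m_5=5$, $m_6=3$; by the Chinese remainder theorem every combination of residues $(k\bmod 3,\,k\bmod 2,\,k\bmod 5)$ occurs among the $30$ gaps in one period. Now two structural facts kill the plan. First, the shifts $2k\bmod\tfrac54$ are $\{0,\tfrac14,\tfrac12,\tfrac34,1\}$, so the five traces of $a_5$ \emph{tile} $[0,1]$ with pairwise disjoint interiors; since for each fixed pair of residues mod $3$ and mod $2$ all five residues mod $5$ occur, the open set left uncovered by $a_2,a_3,a_4,a_6$ on that gap class must lie in every one of $a_5$'s five traces, hence in their intersection, which has empty interior --- so $a_5$ contributes nothing and $a_3,a_4,a_6$ must finish the job alone on every gap. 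Second, count measure: the shifts $\{0,\tfrac12,1\}$ make $a_3$'s three traces tile $[0,1]$, so $\sum_u |T_3^{(u)}|=1$ exactly; $a_6$'s three shifted patterns merge into a $\tfrac15$-on/$\tfrac15$-off pattern of density $\tfrac12$, giving $\sum_u |T_6^{(u)}|\le\tfrac35$; and $a_4$'s two traces total at most $\tfrac23$, so the smaller one has measure at most $\tfrac13$. Taking the gap class with that smaller $a_4$-trace, each residue $u\bmod 3$ would need $|T_3^{(u)}|+|T_6^{(u)}|\ge 1-\tfrac13=\tfrac23$, i.e.\ a total of at least $2$ over the three residues, yet the total is at most $1+\tfrac35=\tfrac85<2$. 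Contradiction.

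This impossibility explains why the paper's proof looks the way it does: it is an explicit piecewise-linear schedule of period $8$ in which $a_2$ and $a_4$ run at full speed but $a_3$, $a_5$ and $a_6$ each \emph{stop} for unit-length time intervals at chosen positions; those stops are exactly what break the rigid periodic alignment that your constant-speed framework cannot escape. So the gap in your proposal is not a missing verification that more patience would supply: the crux you postponed (``choose each successive phase to bury the previous gaps'') cannot be carried out, and any correct proof must let some of the slow agents deviate from constant maximum-speed motion, as the paper's does.
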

\begin{figure}[htbp]
    \begin{center}
          \includegraphics[width=0.98\textwidth]{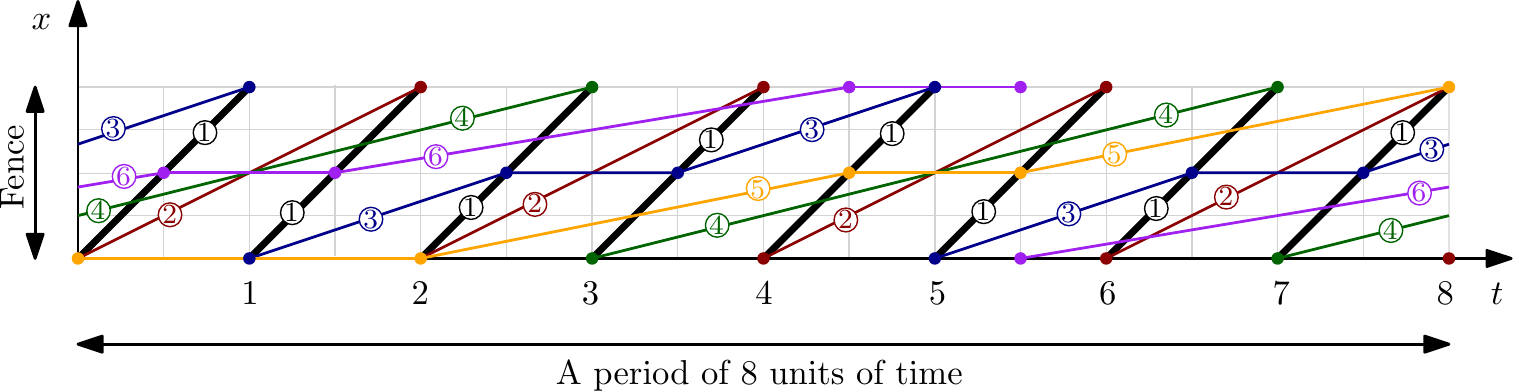}
    \end{center}
\caption{A periodic schedule of $6$ agents of speeds $1/i$, $i=1,\ldots,6$,
on a unit circle with period $8$. Agent $a_1$ moves continuously with speed $1$.
Each point is visited by one of the agents $a_2,a_3,a_4,a_5,a_6$ between any
two consecutive visits of agent $a_1$.}
\label{fig:circle}
\end{figure}

\begin{proof}
Our proof is constructive. We construct a periodic schedule for the $6$
agents with period $8$; refer to Fig.~\ref{fig:circle}. Agents $a_1$,
$a_2$ and $a_4$ continuously move with maximum speed,
while agents $a_3$, $a_5$ and $a_6$ each stop at certain times in
their movements.
Their schedule in one period $t\in [0,8]$ is given by the following
piecewise linear functions.
$$f_1(t)= t\mod 1, \hspace{1cm} f_2(t)=t/2\mod 1, \hspace{1cm}
f_4(t)= (t-3)/4 \mod 1.$$
$$f_3(t)=
\left\{ \begin{array}{ll}
(t-1)/3 \mod 1 & \mbox{\rm for } t\in [0,2.5]\cup [7.5,8]\\
0.5           &  \mbox{\rm for } t\in [2.5,3.5]\cup [6.5,7.5]\\
(t-2)/3 \mod 1 & \mbox{\rm for } t\in [3.5,6.5]\\
t/3 \mod 1 & \mbox{\rm for } t\in [7.5,8].
\end{array}\right.$$
$$f_5(t)=
\left\{ \begin{array}{ll}
0              & \mbox{\rm for } t\in [0,2]\\
(t-2)/5 \mod 1 & \mbox{\rm for } t\in [2,4.5]\\
0.5            &  \mbox{\rm for } t\in [4.5,5.5]\\
(t-3)/5 \mod 1 &  \mbox{\rm for } t\in [5.5,8].
\end{array}\right.$$
$$f_6(t)=
\left\{ \begin{array}{ll}
(t-3.5)/6 \mod 1 & \mbox{\rm for } t\in [0,0.5]\\
0.5            &  \mbox{\rm for } t\in [0.5,1.5]\\
(t-4.5)/6 \mod 1 &  \mbox{\rm for } t\in [1.5,4.5]\\
1            &  \mbox{\rm for } t\in [4.5,5.5]\\
(t-5.5)/6 \mod 1 &  \mbox{\rm for } t\in [5.5,8].
\end{array}\right.$$
\end{proof}

\begin{theorem}\label{thm:counter}
Consider the unit circle, where all agents are required to move in the
same direction. For $32$ agents of harmonic speeds $v_i=1/i$, $i=1,\ldots ,32$,
there is a periodic schedule with idle time strictly less than $1$.
\end{theorem}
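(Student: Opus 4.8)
The plan is to bootstrap the $6$-agent schedule of Lemma~\ref{lem:counter} into a $32$-agent schedule with strictly smaller idle time by upgrading its \emph{closed}-interval guarantee to a \emph{strict} one. Set $I = 1 - \delta$ for a small $\delta > 0$ to be fixed at the end, and work in the covering diagram. Agent $a_1$, moving with unit speed, visits a point $x$ exactly at the times $x + n$ ($n \in \NN$), so with idle time $I$ it leaves uncovered precisely the diagonal gap-strips $\bigcup_n (x + n + I,\, x + n + 1)$, each of height $\delta$ in the time direction. A key observation makes the reduction clean: a single extra visit of $x$ at a time $s$ lying in the \emph{safe middle} $[x + n + \delta,\, x + n + 1 - \delta]$ of a gap interval already covers the whole gap $(x+n+I, x+n+1)$, since the parallelogram it generates spans a time interval of length $I = 1-\delta > \delta$. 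Thus idle time at most $1 - \delta$ is equivalent to: every point $x$ receives, between two consecutive visits of $a_1$, an extra visit bounded away from both endpoints by $\delta$.

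First I would use Lemma~\ref{lem:counter} to supply, for every $x$, an extra visit in the \emph{closed} interval $[x+n, x+n+1]$, and then argue that for small $\delta$ this visit already lies in the safe middle for all but a controlled set of points. As $\delta \to 0$, the only way a point can fail is if its sole covering visit is pinned to an endpoint, i.e.\ occurs exactly at a time $x+n$ or $x+n+1$; but at those times $a_1$ itself is at $x$, so the failing configurations are exactly the crossings of the trajectories $f_2,\ldots,f_6$ with that of $a_1$ in the period-$8$ diagram. These trajectories are piecewise linear with finitely many pieces, so they cross $a_1$ in finitely many points per period; hence the region left uncovered once $I$ is lowered to $1-\delta$ is a union of finitely many small triangles, each shrinking to a point as $\delta \to 0$ and localized near a known position at a known time.

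Next I would patch these small uncovered regions using the remaining $26$ agents $a_7, \ldots, a_{32}$ of speeds $1/7, \ldots, 1/32$. The crucial feature of the model is that an agent may move at any speed between $0$ and its maximum, in particular it may \emph{stop}; an agent dwelling at a position $x_0$ over a time interval $[t_1,t_2]$ covers the vertical strip $\{x_0\}\times[t_1, t_2+I]$ in the diagram, which is precisely the shape needed to fill a localized gap at $x_0$. I would therefore assign to each critical position a dwelling visit by one of the extra agents, timed to land in the safe middle of the corresponding gap, and let each agent cycle (with period $8$, or a convenient multiple) through the critical positions it is responsible for. Since the faster of these agents can traverse a larger fraction of the circle within a period while the slower ones can only service nearby positions, I would allocate critical positions to agents in order of how far apart they lie, checking that the speed budget $8/i$ of agent $a_i$ suffices to shuttle between its assigned positions.

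The main obstacle is exactly this last allocation: verifying that $26$ slow agents, under the harmonic speed constraints, can cover neighborhoods of \emph{all} critical seams of the Lemma's schedule with a common positive slack $\delta$, while keeping the combined schedule periodic. This requires pinning down the finite list of crossing configurations in Fig.~\ref{fig:circle}, grouping nearby ones so that a single (possibly faster) extra agent can dwell at each in turn within its distance budget, and confirming that $26$ agents are enough. Once such an assignment is produced, every gap of $a_1$ is covered either by a safe-middle visit from $\{a_2,\ldots,a_6\}$ or by a dwelling visit from $\{a_7,\ldots,a_{32}\}$; choosing $\delta>0$ small enough that all these visits simultaneously fall in their safe middles yields a periodic schedule with idle time $I = 1-\delta < 1$, as required.
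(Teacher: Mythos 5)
Your overall strategy coincides with the paper's proof: run the six agents of Lemma~\ref{lem:counter}, note that when the target idle time is lowered to $1-\delta$ the failures localize near finitely many critical time--position pairs (those where the Lemma's extra visit degenerates to an endpoint, i.e., to a crossing with $a_1$'s trajectory), and patch small neighborhoods of these using $a_7,\ldots,a_{32}$. However, the step you explicitly defer as ``the main obstacle'' --- producing the assignment and confirming that the $26$ spare agents suffice --- is the actual content of the theorem and is exactly where the number $32$ comes from; leaving it open leaves the theorem unproved. The paper settles it by computing the critical set explicitly: there are $12$ critical pairs per period $8$, all at just \emph{two} positions, namely $(j,0)$ for $j=0,1,\ldots,7$ and $(j+\frac{1}{2},\frac{1}{2})$ for $j=1,3,5,7$. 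Thus position $0$ needs a safe-middle visit in every unit time window and position $\frac{1}{2}$ in every second one. Here unidirectionality is the real constraint, which your ``shuttle within the speed budget'' step does not confront: a patching agent can revisit a position only after a full loop, so an agent run at speed $1/s$ returns after time $s$, and unless $s$ is commensurable with the period $8$ its visits drift relative to the period-$8$ grid, so it services a fixed critical pair only in a vanishing fraction of the periods. The paper therefore runs the spare agents \emph{below} their maximum speeds, at loop times $8$, $16$, $32$: agents $a_7,a_8$ at speed $1/8$ (each covers two critical pairs per period --- one at position $0$ and, half a loop, i.e., $4$ time units later, one at position $\frac{1}{2}$); agents $a_9,\ldots,a_{16}$ at speed $1/16$ in four pairs, the two members of a pair alternating periods; and agents $a_{17},\ldots,a_{32}$ at speed $1/32$ in four quadruples. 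The accounting $2\cdot 2+4\cdot 1+4\cdot 1=12$ exactly exhausts the critical pairs using $2+8+16=26$ spare agents; without this (or an equivalent) verification, the claim that harmonic speeds $1/7,\ldots,1/32$ suffice is not established.

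A secondary flaw is your patching primitive itself: you assert that a dwelling agent covers the vertical strip $\{x_0\}\times[t_1,t_2+I]$, and that this ``is precisely the shape needed to fill a localized gap at $x_0$.'' It is not: as you yourself observe, the uncovered regions are two-dimensional triangles, and a zero-width vertical segment covers only the single fence point $x_0$. If an agent merely parks at $x_0$, nearby points $x_0\pm\eps$ still have gaps of length $1-O(\eps)$, so the idle time is not bounded away from $1$. What actually fills the triangle is the agent \emph{traversing} a neighborhood of $x_0$ during the safe middle (the approach and departure legs of the tour); the dwell itself is dispensable, which is why the paper simply uses constant-speed agents. This is a repairable slip, but it reinforces that the periodic tours --- and hence the loop-time arithmetic above --- are the part of the argument that cannot be waved through.
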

\begin{proof}
Agents $a_1,\ldots,a_6$ follow the periodic schedule described in
Lemma~\ref{lem:counter}.
A time-position pair $(t,x)\in [0,8)\times [0,1)$ is a \emph{critical point}
in the time-position diagram if point $x$ on the fence is \emph{not}
traversed by any agent in the open time interval $(t,t+1)$.
There are exactly $12$ critical points in the schedule in Fig.~\ref{fig:circle}.
Specifically, these points are
$(j,0)$ for $j=0,1,\ldots, 7$; and $(j+\frac{1}{2},\frac{1}{2})$ for
$j=1,3,5,7$.

For each critical point $(t,x)$, we assign one, two, or four agents
such that they jointly traverse a small neighborhood of the critical
point in each period in the periodic schedule.

We schedule agents $a_7$ and $a_8$ to move continuously with speed $1/8$, as follows.
$$f_7(t)= \frac{1}{8}\left(t-\frac{1}{3}\right)\mod 1,
\hspace{1cm}
f_8(t)=\frac{1}{8}\left(t-\frac{7}{3}\right)\mod 1.$$
Agent $a_7$ traverses the unit intervals of the critical points
$(0,0)$ and $(3+\frac{1}{2},\frac{1}{2})$; and agent $a_8$ traverses
the unit intervals of the critical points $(2,0)$ and
$(5+\frac{1}{2},\frac{1}{2})$.
We are left with $8$ critical points, which will be taken care of by
agents $a_9,\ldots , a_{32}$.

Agents $a_9,\ldots,a_{16}$ are scheduled to move with constant speed $1/16$.
These $8$ agents form $4$ pairs, where each pair is responsible to
visit the neighborhood of a critical point in each period of length $8$
(each agent in a pair returns to the same critical point
after $16$ units of time). Finally, agents $a_{17},\ldots,a_{32}$ move
with constant speed $1/32$. These $16$ agents form 4 quadruples, where
each quadruple is responsible to visit the neighborhood of a critical
point in each period of $8$ (each agent in a quadruple returns after $32$ units of
time).

This schedule ensures that every point on the fence within a small
neighborhood of the $12$ critical points is visited by some agent within every time
interval of length $1-\eps$, where $\eps>0$ is a sufficiently small
constant. Apart from these neighborhoods, the first $6$ agents already
visit every point within every time interval of length $1-\eps$ if
$\eps>0$ is sufficiently small.
\end{proof}

\paragraph{Remark.} In Theorem~\ref{thm:counter}, we required that
all agents move in the same direction (clockwise) along the unit circle,
but we allowed agents to stop (\ie, have zero speed). If all agents
are required to maintain a \emph{strictly} positive speed, the proof
of Theorem~\ref{thm:counter} would still go through: in this case,
agents $a_3$, $a_5$ and $a_6$ could move at an extremely slow but
positive speed instead of stopping. As a result, some points at the
neighborhoods of the $12$ critical points would remain unvisited for
1 unit of time (this frequency is maintained by agent $a_1$ alone).
However, agents $a_7,\ldots , a_{32}$ would still ensure that every
point in these neighborhoods is also visited within every time
interval of length $1-\eps$.

\paragraph{Finite time patrolling.}
Interestingly enough, we can achieve any prescribed idle time below $1$
for an arbitrarily long time in this setting, provided we choose the
number of agents $k$ large enough.

\begin{theorem} \label{thm:tau}
Consider the unit circle, where all agents are required to move in the
same direction. For every $0<\tau \leq 1$ and $t \geq \tau$, there exists
$k=k(t) \leq e^{4t/\tau^2}$ and a schedule for the system
of $k$ agents with maximum speeds $v_i=1/i$, $i=1,\ldots,k$, that ensures
an idle time $\leq \tau$ during the time interval $[0,t]$.
\end{theorem}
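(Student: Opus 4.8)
The plan is to reduce the idle-time requirement to a covering condition over a sequence of short time windows, and then to satisfy it by a single disjoint assignment of agents to windows, exploiting the fact that harmonic speeds accumulate only logarithmically. First I would discretize time: partition $[0,t]$ into $n=\lceil 2t/\tau\rceil$ consecutive windows $W_j=[(j-1)\tau/2,\,j\tau/2]$ of length $\tau/2$. It then suffices to guarantee that every point of the circle is visited by some agent during every window $W_j$. Indeed, any interval $[y-\tau,y]\subseteq[0,t]$ has length $\tau=2\cdot(\tau/2)$ and hence fully contains at least one window $W_j$, in which the point is visited; thus every point is revisited within any $\tau$-length span, which is exactly $\idle\le\tau$ on $[0,t]$.

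Second, I would cover each window by a dedicated group of agents. Partition the index set $\{1,\dots,k\}$ into $n$ consecutive blocks $G_1,\dots,G_n$ and make $G_j$ responsible for $W_j$: each agent in $G_j$ stays idle (speed $0$) outside $W_j$, and during $W_j$ moves clockwise at full speed. Since agent $i$ has speed $1/i$, in time $\tau/2$ it sweeps an arc of length $\tau/(2i)$, taking exactly $\tau/2$ to traverse it (so it covers its arc over the entire window). Ordering these arcs consecutively around the circle and starting each agent at the trailing endpoint of its arc, the sweeps of $G_j$ jointly traverse the whole circle during $W_j$ as soon as the total arc length is at least $1$, i.e. $\frac{\tau}{2}\sum_{i\in G_j}\frac1i\ge 1$, that is $\sum_{i\in G_j}\frac1i\ge \frac2\tau$. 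All motion is clockwise (agents are otherwise stopped), so the schedule is admissible for the unidirectional circle; the remark following Theorem~\ref{thm:counter} lets me replace the stopped phases by arbitrarily slow positive motion if strictly positive speeds are demanded.

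Third comes the count, which is where the bound $e^{4t/\tau^2}$ originates. Because $\sum_{i=a}^{b-1}\frac1i\ge\ln(b/a)$, a block of indices of multiplicative width $b/a\ge e^{2/\tau}$ already supplies harmonic speed $\ge 2/\tau$. Choosing each $G_j$ to be the shortest such block, its top index is about $e^{2/\tau}$ times its bottom index; starting from index $1$, after $n=\lceil 2t/\tau\rceil$ blocks the largest index $k$ satisfies $\ln k\approx n\cdot\frac2\tau=\frac{2t}{\tau}\cdot\frac2\tau=\frac{4t}{\tau^2}$, so $k\le e^{4t/\tau^2}$. The conceptual heart of the argument is this budget computation: holding the idle time at $\tau$ over a horizon $t$ forces one full sweep of the unit circle in each of $\Theta(t/\tau)$ windows, each sweep costing total speed $\Theta(1/\tau)$, so the agents must supply total speed $\Theta(t/\tau^2)$; since $\sum_{i\le k}1/i\approx\ln k$, this pins down $\ln k=\Theta(t/\tau^2)$.

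The main obstacle I anticipate is not conceptual but lies in the two checks that certify the exponent is exactly $4t/\tau^2$ rather than merely $e^{O(t/\tau^2)}$: passing from ``total arc length $\ge 1$'' to genuine coverage of the circle (one must order each group's arcs so consecutive sweeps abut with no gap, and confirm that the window length $\tau/2$ matches each agent's arc-traversal time), and absorbing the integrality of the block boundaries—the ceilings incurred when passing from $a_j$ to $a_{j+1}$—into the constant without inflating the exponent. Neither is deep, but they are precisely what must be verified, together with the elementary inequality $\sum_{i=a}^{b-1}1/i\ge\ln(b/a)$ that converts the per-window speed demand into the multiplicative growth of the index blocks.
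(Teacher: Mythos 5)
Your reduction (idle time $\le\tau$ follows from covering every aligned window of length $\tau/2$) and your coverage-within-a-window argument are both fine, and your scheme is a genuine variant of the paper's: the paper keeps the unit-speed agent $a_1$ sweeping in \emph{every} window and only draws fresh agents for the remaining $1-\tau/2$ of the circle, whereas you assign pairwise disjoint blocks of agents to the windows with no reuse. That difference is exactly where your proof breaks. Because you get no help from reuse, each of your $n$ windows demands fresh harmonic mass $\sum_{i\in G_j}1/i\ge 2/\tau$, so $H_k\ge n\cdot 2/\tau$. With $n=\lceil 2t/\tau\rceil$ as you wrote, whenever $2t/\tau$ is not an integer this forces $H_k\ge 4t/\tau^2+(2/\tau)(1-\{2t/\tau\})$, i.e.\ up to nearly $4t/\tau^2+2/\tau$; since $H_k\le 1+\ln k$, your construction then \emph{requires} $k\ge e^{4t/\tau^2+2/\tau-O(1)}$, which violates the claimed bound $k\le e^{4t/\tau^2}$ badly for small $\tau$ (e.g.\ $\tau=0.1$, $2t/\tau=M+0.1$ gives $\ln k\ge 20M+19$ versus the target $20M+2$). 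This is not a bookkeeping detail to be ``absorbed into the constant'': your budget equals the target exponent exactly, so an error of $2/\tau$ is fatal. The fix is available inside your own first paragraph: your containment argument shows that any $\tau$-interval inside $[0,t]$ contains a \emph{full} window $W_j\subseteq[0,t]$, so only the $\lfloor 2t/\tau\rfloor$ full windows ever need to be covered; the partial last window can be ignored.

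Even after replacing the ceiling by the floor, your accounting still needs sub-constant slack that your sketch does not provide, again because the total demand is exactly $4t/\tau^2$. You need (a) greedy minimal blocks, so the per-block overshoot is at most the last term $1/b_j$, and since the $b_j$ grow by factors $\ge e^{2/\tau}\ge e^2$ from $b_1\ge 4$, the total overshoot is $\sum_j 1/b_j<0.3$; and (b) the inequality $\ln k\le H_k-\gamma$ (valid since $H_k-\ln k$ decreases to the Euler constant $\gamma>0.577$), not merely $\ln k\le H_k$. Combining, $\ln k\le \lfloor 2t/\tau\rfloor(2/\tau)+0.3-\gamma<4t/\tau^2$, which does close the argument --- but note that your stated per-block bound $\sum_{i=a}^{b-1}1/i\ge\ln(b/a)$ with blocks of multiplicative width $e^{2/\tau}$ leaks an additional error per block (from the index increments $b_j\mapsto b_j+1$ and ceilings) that accumulates linearly in $n$ and is not covered by this slack; the greedy/minimal choice is what keeps the overshoot geometric. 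By contrast, the paper's reuse of $a_1$ lowers the per-window fresh demand to $2/\tau-1$, creating slack of order $t/\tau$ that harmlessly swallows the rounding of $t$, the overshoots, and the additive constants --- which is why its computation is short while yours must be exact.
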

\begin{proof}
We construct a schedule with an idle time at most $\tau$.
Let agent $a_1$ start at time $0$ and move clockwise at maximum (unit)
speed, \ie, $a_1(t)=t \mod 1$ denotes the position on the unit circle
of agent $a_1$ at time $t$.
Assume without loss of generality that $t$ is a multiple
of $\tau$, \ie, $t=m \tau$, where $m$ is a natural number.
Divide the time interval $[0,t]$ into $2m$ subintervals of length $\tau/2$.
For $j=1,\ldots,2m$, $[(j-1)\tau/2,j\tau/2]$ is the $j$th interval.

For each $j$, cover the unit circle $C$ so that every point
of $C$ is visited at least once by some agent. This ensures that
each point of the circle is visited at least once in the time interval
$[0,\tau/2]$ and no two consecutive visits to any one point are
separated in time by more than $\tau$ thereafter until time $t$,
as required.

To achieve the covering condition in each interval $j$, we use the first
agent ($a_1$, of unit speed), and as many other unused agents as needed.
The `origin' on  $C$ is reset to the current position of $a_1$ at
time $(j-1)\tau/2$, \ie, the beginning of the current time interval.
So the fastest agent is used (continuously) in all $2m$ time intervals.
Agent $a_1$ can cover a distance of $\tau/2$ during one interval.
From its endpoint, at time $(j-1)\tau/2$,  start the unused agent with
the smallest index, say $i_1(j)$; this agent can cover a distance of
$\frac{\tau}{2} \frac{1}{i_1(j)}$ during the interval. Continue in the same
way using new agents, all starting at time $(j-1)\tau/2$, until the
entire circle $C$ is covered; let the index of the last agent used be $i_2(j)$.
The covering condition can be written as:
\begin{equation} \label{E1}
\frac{\tau}{2} \left(1 + \sum_{i=i_1(j)}^{i_2(j)} \frac{1}{i} \right)
\geq 1, \textup{ or equivalently, }
1 + \sum_{i=i_1(j)}^{i_2(j)} \frac{1}{i} \geq \frac{2}{\tau}.
\end{equation}
For example, if $\tau=2/3$: $j=1$ requires agents $a_1$ through
$a_{11}$, since $H_{11} \geq 3$, but  $H_{10} <3$;
$j=2$ requires agents $a_1$ and agents $a_{12}$ through
$a_{85}$, since $1+(H_{85}-H_{11}) \geq 3$, but $1+(H_{84}-H_{11}) <3$.

We now bound from above the total number $k$ of distinct agents used
\ie, with speeds $1/i$, for $i=1,\ldots,k$.
Observe that the covering condition~\eqref{E1} may lead to
overshooting the target. Because the harmonic series has decreasing
terms, the overshooting error cannot exceed the term
$\frac{1}{i_2(1)+1}$ for $\tau=1$, namely $1/5$ (the overshooting for
$\tau=1$ is only $\frac13 -\frac14=\frac{1}{12} < \frac15$).
So inequality~\eqref{E1} becomes
\begin{equation} \label{E2}
\frac{2}{\tau} \leq 1 + \sum_{i=i_1(j)}^{i_2(j)} \frac{1}{i} \leq
\frac{2}{\tau} + \frac{1}{5}.
\end{equation}

Recall that $t=m\tau$.
By adding inequality~\eqref{E2} over all $2m$ time intervals yields
(in equivalent forms)
\begin{equation} \label{E3}
H_k -1 + \frac{8m}{5} \leq \frac{4m}{\tau}, \textup{  or  }
H_k \leq \frac{4t}{\tau^2} + 1 - \frac{8t}{5\tau}.
\end{equation}

For $t \geq \tau$ we have $1 \leq \frac{8t}{5\tau}$.
Since $\ln{k} \leq H_k$, it follows from~\eqref{E3} that
$$ \ln{k} \leq \frac{4t}{\tau^2}, \textup{ or } k \leq e^{4t/\tau^2},
$$
as required.
\end{proof}

\section{Bidirectional Circle Patrolling}
\label{sec:2/3}

\paragraph{A new schedule for closed fence patrolling.}
Czyzowicz~\etal~\cite[Theorem 5]{CGKK11} showed that for $k=3$
there exist maximum speeds $v_1,v_2,v_3$ and a schedule that achieves
a shorter idle time than both algorithm $\A_1$ and $\A_2$,
namely $35/36$ versus $12/11$ and $1$. We extend this result for all $k \geq 4$.

We propose a new algorithm, $\A_3$ for maximum speeds
$v_1 \geq v_2 \geq \ldots \geq v_k>0$, and then show that
$\A_3$ outperforms both $\A_1$ and $\A_2$ for
some speed settings for all $k\geq 4$.

We will need $v_1>v_2$ in this algorithm. Place the $k-1$ agents
$a_2,\ldots,a_k$ at equal distances, $x$ on the unit circle, and let
them move all clockwise perpetually at the same speed $v_k$; we say that
$a_2,\ldots,a_k$ make a ``train''. Let $a_1$ move back and forth
(\ie, clockwise and counterclockwise) perpetually on the moving arc of
length $1-(k-2)x$, \ie, between the start and the end of the train.
Refer to Fig.~\ref{fig:train}.
\begin{figure}[htbp]
    \begin{center}
	  \includegraphics[scale=0.55]{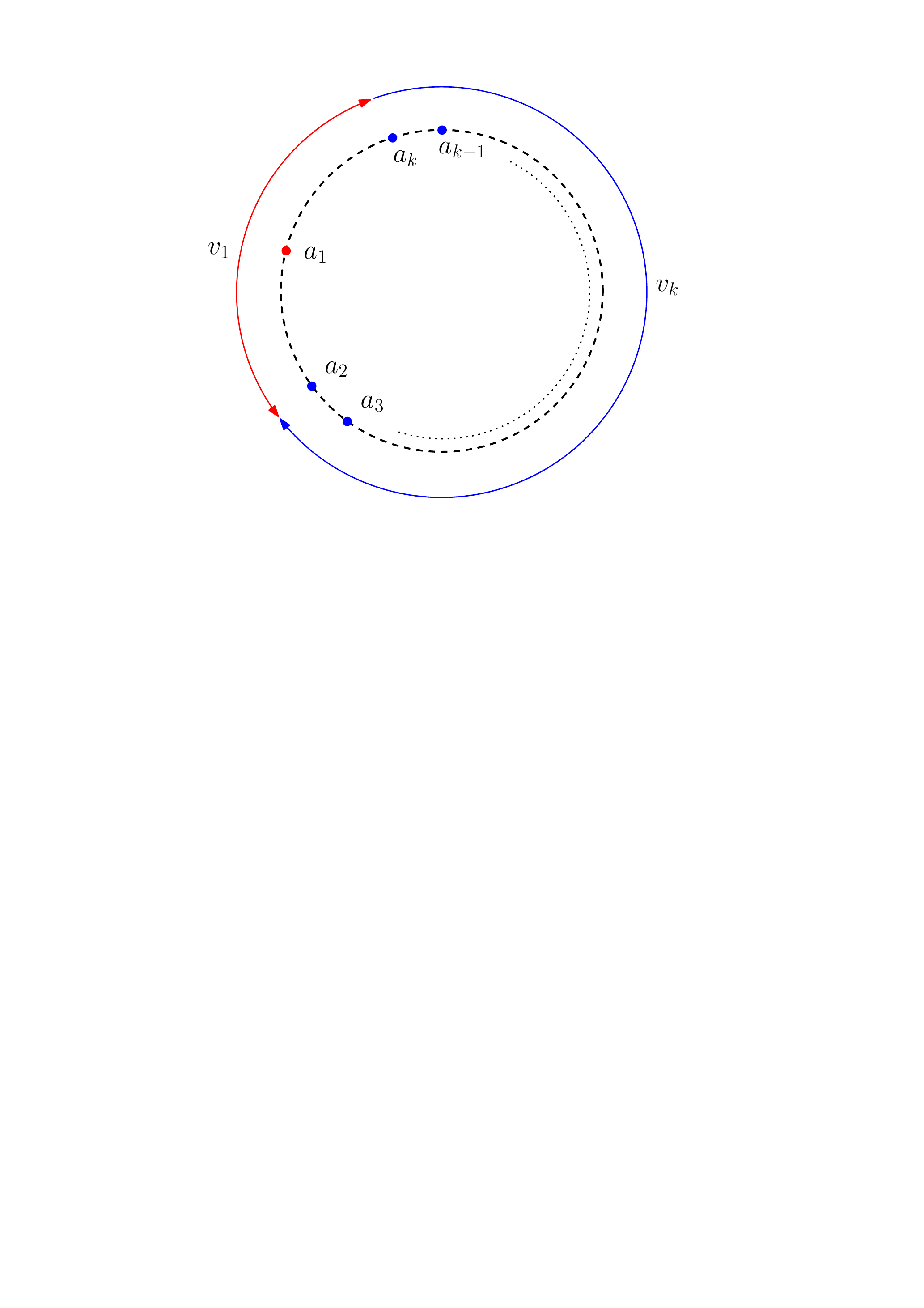}
    \end{center}
\caption{Train algorithm: the train $a_2,\ldots,a_k$ moving
  unidirectionally with speed $v_k$ and the bidirectional agent $a_1$
  with speed $v_1$.}
\label{fig:train}
\end{figure}

\begin{proposition} \label{prop:2/3}
For every $k \geq 4$, there exist maximum speeds $v_1 > v_2 \geq \ldots \geq v_k$
such that algorithm $\A_3$ achieves a shorter idle time than $\A_1$ and $\A_2$.
In particular, for large $k$, the idle time achieved by the train algorithm
is  about $2/3$ of those achieved by $\A_1$ and $\A_2$.
\end{proposition}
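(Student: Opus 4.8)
The plan is to exhibit one explicit speed profile and to compare the three idle times head-on. I would normalize the fastest agent to $v_1=1$ and set $v_2=\dots=v_k=1/k$. The point of this choice is that $\A_3$ depends only on $v_1$ and $v_k$ (the agents $a_2,\dots,a_{k-1}$ merely ride in the train at speed $v_k$), so I am free to push the middle speeds down to the smallest value the ordering $v_1>v_2\ge\dots\ge v_k$ permits, which is exactly what makes $\A_1$ and $\A_2$ as bad as possible. For this profile $\sum_i v_i=(2k-1)/k$, so $\idle(\A_1)=2/\sum_i v_i=2k/(2k-1)$; and since $iv_i=i/k\le 1$ for all $i\le k$ with equality at $i=1$ and $i=k$, we have $\max_i iv_i=1$ and hence $\idle(\A_2)=1$. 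Both numbers follow directly from the formulas recalled in the introduction.

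The heart of the argument is to compute $\idle(\A_3)$. I would pass to the reference frame travelling with the train: there the $k-1$ train agents are stationary and equally spaced by $x$, agent $a_1$ sweeps the gap arc of length $L=1-(k-2)x$ with relative speed $v_1-v_k$ forward and $v_1+v_k$ backward, and every point of the circle drifts across the arc at speed $v_k$, being visited by the two boundary train agents precisely when it enters and when it leaves the arc. In the train part the gap between consecutive visits is plainly $x/v_k$. For the gap arc I would track a single drifting point and its successive crossings with the zigzag of $a_1$, splitting into ``visited while $a_1$ moves forward'' and ``visited while $a_1$ moves backward,'' and in each case comparing the time for $a_1$ to turn at the far end and return with the time for the point to reach the boundary (where a train agent catches it). The outcome I expect is that every gap is at most $2L/(v_1+v_k)$, attained at a threshold position, whence
$$\idle(\A_3)=\max\Big\{\tfrac{x}{v_k},\ \tfrac{2L}{v_1+v_k}\Big\},\qquad L=1-(k-2)x.$$
This is the main obstacle, and it is a genuinely delicate point: it is tempting but wrong to bound the gap by the round-trip time $2L/(v_1-v_k)$; the sharper constant $2L/(v_1+v_k)$ comes exactly from the entry/exit visits supplied by the boundary agents, and getting it right is what makes the comparison succeed for small $k$.

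Finally I would optimize over the spacing $x$ by balancing the two terms, $x/v_k=2L/(v_1+v_k)$, giving $x=2v_k/\big(v_1+(2k-3)v_k\big)$ and $\idle(\A_3)=2/\big(v_1+(2k-3)v_k\big)$; substituting $v_1=1$, $v_k=1/k$ yields $\idle(\A_3)=2k/\big(3(k-1)\big)$. The comparison is then immediate: $\idle(\A_3)<\idle(\A_1)$ reduces to $2k-3>k-1$, true for all $k\ge 3$, while $\idle(\A_3)<\idle(\A_2)=1$ reduces to $2k<3(k-1)$, i.e.\ $k>3$; hence $\A_3$ beats both $\A_1$ and $\A_2$ for every $k\ge 4$. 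For the asymptotics, $\idle(\A_3)=2k/\big(3(k-1)\big)\to 2/3$ while $\idle(\A_1)\to 1$ and $\idle(\A_2)=1$, so the ratio of $\A_3$'s idle time to each of the others tends to $2/3$, as claimed. Along the way I would record the geometric sanity checks $0<L<1$ and $(k-2)x<1$ for the chosen $x$ to confirm the schedule is well defined.
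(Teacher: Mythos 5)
Your proposal is correct, and it takes a genuinely different---and in fact sharper---route than the paper. The paper never analyzes the gap arc of length $L=1-(k-2)x$ via entry/exit visits: it tunes the spacing $x$ so that $a_1$'s round-trip time across the arc equals the train gap, $[1-(k-2)x]\bigl(\frac{1}{a-b}+\frac{1}{a+b}\bigr)=\frac{x}{b}$ with $a=v_1$, $b=v_k$ (this is valid because any point that stays inside the arc for a full round trip must be crossed by $a_1$), obtaining $\idle(\A_3)=\frac{2a}{a^2-b^2+2(k-2)ab}$; beating $\idle(\A_2)=\frac{1}{kb}$ then forces $a/b\geq 2+\sqrt{5}$, and the paper finally sets $a=1$, $b=1/k$. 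Your accounting of the entry visit (rear train agent) and exit visit (front train agent) replaces the round-trip quantity $\frac{L}{a-b}+\frac{L}{a+b}$ by the smaller $\frac{2L}{a+b}$ (incidentally, that sum is the actual round-trip time; your aside calls it $\frac{2L}{v_1-v_k}$, a harmless mislabel), and balancing gives $\idle(\A_3)=\frac{2}{a+(2k-3)b}$, strictly smaller than the paper's value whenever $a>b$. This sharpening is not cosmetic: in the paper's route the constraints $a\leq kb$ (needed so that $\idle(\A_2)=\frac{1}{kb}$) and $a/b\geq 2+\sqrt{5}$ are incompatible at $k=4$---with $a=1$, $b=\frac14$ the paper's formula yields $\frac{32}{31}>1=\idle(\A_2)$, and the alternative regime $a>kb$ is likewise infeasible---so the paper's own computation covers only $k\geq 5$, whereas your value $\frac{2k}{3(k-1)}=\frac{8}{9}<1$ handles $k=4$ and hence proves the proposition exactly as stated.

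The only thing separating your text from a complete proof is that the key formula is announced as ``the outcome I expect'': you must actually carry out the crossing analysis, and it does check out. In the train frame, with the front (exit) end at $0$ and the rear (entry) end at $L$: (1) from entry to the first crossing the worst case is $a_1$ having just left the rear end, giving exactly $\frac{2L}{v_1+v_k}$; (2) after a crossing with $a_1$ moving toward the rear at distance $z$ from the front, the next visit comes within $\min\bigl(\frac{2(L-z)}{v_1-v_k},\frac{z}{v_k}\bigr)\leq\frac{2L}{v_1+v_k}$, the cap $\frac{z}{v_k}$ being the exit visit; (3) after a crossing with $a_1$ moving toward the front at distance $z$, the next crossing comes within $\frac{2z}{v_1+v_k}\leq\frac{2L}{v_1+v_k}$. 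Writing out these three short computations, together with the trivial bound $x/v_k$ inside the train, completes your argument.
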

\begin{proof}
Consider the speed setting $v_1=a$, $v_2=\ldots=v_k=b$, where $a>b>0$,
and $\max_{1 \leq i \leq k} i v_i =kb$ (\ie, $a \leq kb$).
Put $y=1 -(k-2)x$.
To determine the idle time, $x/b$, write:
$$ [1 -(k-2)x] \left(\frac{1}{a-b} + \frac{1}{a+b} \right) =
\frac{x}{b}, \textup{ or equivalently,  }
\frac{2ay}{a^2-b^2} = \frac{1-y}{(k-2)b}. $$
Solving for $x/b$ yields
$$ \idle(\A_3)=\frac{2a}{a^2-b^2 + 2(k-2)ab}. $$
For our speed setting, we also have
$$ \idle(\A_1)= \frac{2}{a + (k-1)b}, \textup{ and }
\idle(\A_2)= \frac{1}{kb}. $$
Write $t=a/b$. It can be checked that for $k \geq 4$,
$ \idle(\A_3) \leq \idle(\A_1)$ and $ \idle(\A_3) \leq \idle(\A_2)$
when $a^2-b^2-4ab \geq 0$, \ie, $t \geq 2+\sqrt5$.
In particular, for $a=1$, and $b=1/k$ (note that $a \leq kb$), we have
$$ \idle(\A_3) = \frac{2}{1-1/k^2 + 2(k-2)/k }
\underset{k \to \infty}{\longrightarrow} \, \frac{2}{3}, $$
while
\begin{equation*}
\idle(\A_1) = \frac{2}{1 + (k-1)/k}
\underset{k \to \infty}{\longrightarrow} \, 1, \textup{ and }
\idle(\A_2) = \frac{1}{k(1/k)} =1.
\tag*{\qedhere}
\end{equation*}
\end{proof}

\paragraph{Useless agents for circle patrolling.}
Czyzowicz~\etal~\cite{CGKK11} showed that for $k=2$ there are
maximum speeds for which an optimal schedule does not use one
of the agents. Here we extend this result for all $k \geq 2$:

\begin{proposition} \label{prop:useless}
{\rm (i)} For every $k \geq 2$, there exist maximum speeds
$v_1 \geq v_2 \geq \ldots \geq v_k>0$ and an optimal schedule
for the circle with these speeds that does not use
up to $k-1$ of the agents $a_2,\ldots,a_k$.

\noindent {\rm (ii)} In contrast, for a segment, any optimal schedule
must use all agents.
\end{proposition}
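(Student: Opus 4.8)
The plan is to prove the two parts by opposite mechanisms: for (i) I exhibit a speed profile in which a single fast agent is already optimal, so the other $k-1$ agents may be left idle, while for (ii) I show that on a segment any omitted agent can always be put to strictly good use, so no optimal schedule can omit one. For (i), normalize $v_1=1$ and set $v_2=\cdots=v_k=\eta$ with $\eta\le 1/k$ (any smaller value works as well). The candidate optimal schedule uses only $a_1$, moving clockwise at unit speed, $f_1(t)=t\bmod 1$; this visits every point once per unit of time, so its idle time is exactly $1=1/v_1$. Since adding agents can never increase the achievable idle time (the extra agents can be ignored), the optimum is at most $1$, and the whole difficulty is the matching lower bound: no schedule with these speeds has idle time below $1$. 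Once that is established, the $a_1$-only schedule is optimal and leaves all of $a_2,\dots,a_k$ unused, which is exactly the assertion.

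The lower bound is the crux, and it is where I expect the main obstacle. First I would dispose of the easy regime: if $a_1$ never exceeds speed $\eta$ in some schedule, then all $k$ agents move at speed $\le\eta$ and the covering (volume) bound gives idle time $\ge 1/(k\eta)\ge 1$ for $\eta\le 1/k$; so we may assume $a_1$ genuinely uses speeds above $\eta$. The governing phenomenon is that the arc left uncovered by $a_1$ sweeps around the circle essentially at $a_1$'s own speed, which exceeds $\eta$, whereas each slow agent advances at speed $\le\eta$ and hence cannot stay with this moving frontier. Quantitatively, if $a_1$ circulates at speed $\approx s>\eta$, every point is left unvisited by $a_1$ for stretches recurring with period $\approx 1/s$, and to push the idle time below $1/s$ the slow agents would have to revisit each point inside these recurring windows, i.e.\ refresh the entire circle at a rate exceeding the total length-coverage rate $\sum_{i\ge 2}v_i=(k-1)\eta$ they can sustain. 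Turning this into a rigorous bound for \emph{arbitrary} (bidirectional, variable-speed, non-periodic) behavior of $a_1$ is the hard step; I would do it by comparing, over a long horizon $[0,T]$, the number of visits each point must receive against the total variation $\sum_i\int_0^T|f_i'|\le\sum_i v_i T$ of the trajectories, charged so that the single fast agent cannot simultaneously make up the deficit at well-separated points. For $k=2$ this is precisely the statement that a sufficiently slow second agent is useless, shown by Czyzowicz~\etal; the new content for general $k$ is that many individually useless slow agents cannot be aggregated into one effective fast sweep, which is exactly why $\eta$ is taken small enough that even the fully uniform use of all $k$ agents at common speed $\eta$ fails to beat $1$.

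For (ii), suppose for contradiction that some optimal schedule $S$ for a segment of length $\ell$, with optimal idle time $I^*>0$, does not use agent $a_j$ (with $v_j>0$). Then $S$ is a valid schedule for the remaining agents on $[0,\ell]$ with idle time $I^*$, and I build a strictly better schedule for the full team. Fix a small $\delta\in(0,\ell)$, carve off the end segment $[\ell-\delta,\ell]$, and assign it to $a_j$ alone, shuttling at speed $v_j$ with idle time $2\delta/v_j$. On $[0,\ell-\delta]$ I reuse $S$ after a spatial compression by $\lambda=(\ell-\delta)/\ell<1$ combined with a time speed-up by $1/\lambda$: replace each used trajectory $f_i$ by $\tilde f_i(t)=\lambda\,f_i(t/\lambda)$. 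Then $|\tilde f_i'(t)|=|f_i'(t/\lambda)|\le v_i$, so speed limits are respected, the image of $\tilde f_i$ is $[0,\ell-\delta]$, and the visit times to each point are those of $S$ scaled by $\lambda$, so every point of $[0,\ell-\delta]$ now has idle time $\lambda I^*<I^*$. The combined schedule uses all agents and has idle time $\max(\lambda I^*,\,2\delta/v_j)$; choosing $\delta<v_j I^*/2$ makes both terms strictly below $I^*$, contradicting the optimality of $I^*$. Hence every optimal schedule on a segment uses all agents. The contrast with the circle is exactly the back-and-forth penalty: carving off an end and compressing the rest buys a genuine speed-up on a segment, whereas on a circle a lone fast agent already circulates with no such slack to exploit.
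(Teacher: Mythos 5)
Your part (i) has a genuine gap, and in fact two problems. First, the lower bound that you yourself identify as the crux---that no schedule with these speeds achieves idle time below $1$---is never proved: the ``moving frontier'' discussion and the plan to compare visit counts against the total variation $\sum_i\int_0^T|f_i'|$ over a long horizon is a heuristic, not an argument, and it is precisely where all the difficulty lies (the slow agents' visits can be interleaved adaptively with $a_1$'s bidirectional, non-periodic motion, so a naive charging scheme does not close). Second, your parameter regime is not merely unproved but false: with $v_1=1$, $v_2=\cdots=v_k=\eta=1/k$ and $k\ge 5$, the train algorithm $\A_3$ of Proposition~\ref{prop:2/3} (with $a=1$, $b=1/k$) achieves idle time $2/\bigl(1-1/k^2+2(k-2)/k\bigr)<1$, tending to $2/3$; already for $k=5$ this is about $0.926$. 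So for $\eta=1/k$ the single-agent schedule is \emph{not} optimal and the slow agents are \emph{not} useless. The paper avoids this by taking the slow speeds far smaller, $v_2=\cdots=v_k=\eps/k$ with $\eps\le 1/300$, and then the lower bound has a short concrete proof: during $[0,2]$ the slow agents collectively visit a set $J$ of measure at most $2\eps$; if the idle time were $<1$, then $a_1$ must end each of the intervals $[0,1]$ and $[1,2]$ within $O(\eps)$ of its starting point and must make an almost complete rotation in each, in the same direction; picking three points $x_1,x_2,x_3\in C\setminus J$ near $1/4$, $1/2$, $3/4$, these must be visited by $a_1$ alone in the cyclic order $x_1,x_2,x_3,x_1,x_2,x_3$, so two successive visits to $x_1$ are at least one time unit apart---a contradiction. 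Some argument of this shape (a measure bound on what the slow agents can touch, plus an ordering argument pinning down $a_1$'s rotations) is what your sketch would need to become; as written, part (i) is incomplete, and with $\eta=1/k$ it is incorrect.

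Your part (ii), by contrast, is correct and is essentially the paper's argument in a different guise: the paper works in the dual formulation (fix unit idle time, take a longest guardable segment, and extend it at one end by a subsegment of length $v_j/2$ patrolled by the unused agent $a_j$), whereas you fix the segment and make the scaling equivalence explicit via the compression $\tilde f_i(t)=\lambda f_i(t/\lambda)$ with $\lambda=(\ell-\delta)/\ell$, which cleanly converts the spare agent into a strict improvement of the idle time. That half stands as a complete proof.
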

\begin{proof}
(i) Let $v_1=1$ and $v_2=\ldots=v_k=\eps/k$, for a small positive $\eps \leq 1/300$,
and $C$ be a unit circle. Obviously by using agent $a_1$ alone
(moving perpetually clockwise) we can achieve unit idle time. Assume
for contradiction that there exists a schedule achieving an idle
time less than $1$.
Let $a_1(t)=t \mod 1$ denote the position of agent $a_1$ at time $t$.
Assume without loss of generality that $a_1(0)=0$ and consider the
time interval $[0,2]$. For $2 \leq i \leq k$, let $J_i$ be the
interval of points visited by agent $a_i$ during the time interval
$[0,2]$, and put $J=\cup_{i=2}^k J_i$.
We have $|J_i| \leq 2\eps/k$, thus $|J| \leq 2\eps$.
We make the following observations:

\begin{enumerate}
\item
$a_1(1) \in [-2\eps, 2\eps]$.
Indeed, if $a_1(1) \notin [-2\eps, 2\eps]$, then either some point in
$[-2\eps, 2\eps]$ is not visited by any agent during the time interval
$[0,1]$, or some point in $C \setminus [-2\eps, 2\eps]$ is not visited
by any agent during the time interval $[0,1]$.

\item
$a_1$ has done almost a complete (say, clockwise) rotation along $C$
  during the time interval $[0,1]$, \ie, it starts at
$0 \in [-2\eps, 2\eps]$ and ends in $[-2\eps, 2\eps]$, otherwise some
  point in $C \setminus [-2\eps, 2\eps]$ is not visited during the
  time interval $[0,1]$.

\item
$a_1(2) \in [-4\eps, 4\eps]$, by a similar argument.

\item
$a_1$ has done almost a complete rotation along $C$ during the
time interval $[1,2]$, \ie, it starts in $[-2\eps, 2\eps]$ and
ends in $[-4\eps, 4\eps]$. Moreover this rotation must be in the same
 clockwise sense as the previous one, since otherwise there would
 exist points not visited for at least one unit of time.
\end{enumerate}

Pick three points $x_1,x_2,x_3 \in C \setminus J$
close to $1/4$, $2/4$, and $3/4$,
respectively, \ie, $|x_i - i/4| \leq 1/100$, for $i=1,2,3$.
By Observations 2 and 4, these three points must be visited by $a_1$ in
the first two rotations during the time interval $[0,2]$ in the order
$x_1,x_2,x_3,x_1,x_2,x_3$.
Since $a_1$ has unit speed, successive visits to $x_1$
are separated in time by at least one time unit, contradicting
the assumption that the idle time of the schedule is less than $1$.

\smallskip
(ii) Given $v_1 \geq v_2 \geq \ldots \geq v_k > 0$,
assume for contradiction that there is an optimal
guarding schedule with unit idle time for a segment $s$ of maximum
length that does not use agent $a_j$ (with maximum speed $v_j$), for
some $1 \leq j \leq k$. Extend $s$ at one end by a subsegment of length
$v_j/2$ and assign $a_j$ to this subsegment to move back and
forth from one end to the other, perpetually. We now have a guarding
schedule with unit idle time for a segment longer than $s$, which is a
contradiction.
\end{proof}

\paragraph{Overtaking other agents.}
Consider an optimal schedule for circle patrolling (with unit idle time)
for the agents in the proof of Proposition~\ref{prop:useless},
with $v_1=1$ and $v_2=\ldots=v_k=\eps/k$, in which all agents move clockwise
at their maximum speeds. Obviously $a_1$ will overtake all other agents during
the time interval $[0,2]$. Thus there exist settings in which if all
$k$ agents are used by a patrolling algorithm, then some agent(s) need
to overtake (pass) other agent(s).
Observe however that overtaking can be easily avoided in this setting
by not making use of any of the agents $a_2,\ldots,a_k$.

\section{An Improved Idle Time for Open Fence Patrolling} \label{sec:24/25}

Kawamura and Kobayashi~\cite{KK12} showed that algorithm $\A_1$ 
by Czyzowicz~\etal~\cite{CGKK11} does not always produce an optimal
schedule for open fence patrolling. They presented two counterexamples: 
their first example uses $6$ agents and achieves an idle time of
$\frac{41}{42} \, \idle(\A_1)$; 
their second example uses $9$ agents and achieves an idle time of
$\frac{99}{100} \, \idle(\A_1)$.  
By replicating the strategy from the second example with a number of
agents larger than $9$, \ie, iteratively using blocks of agents, we
improve the ratio to $24/25+\eps$ for any $\eps>0$.   
We need two technical lemmas to verify this claim. 
\begin{figure}[htbp]
    \begin{center}
          \includegraphics[scale=0.55]{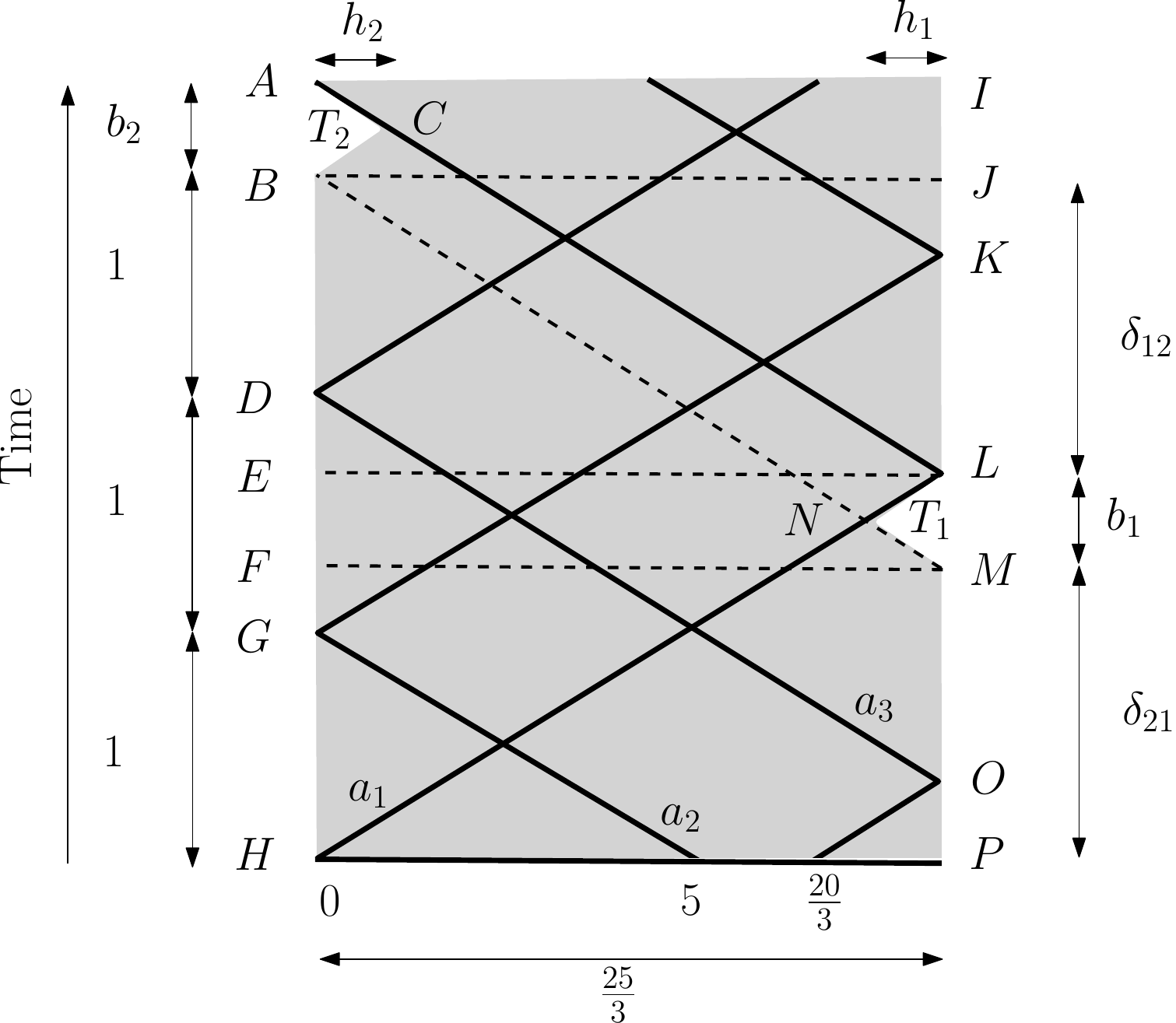}
    \end{center}
\caption{Three agents each with a speed of $5$ patrolling a fence of length
$25/3$; their start positions are $0$, $5$, and $20/3$, respectively.
Figure is \emph{not} to scale.}
\label{fig:LemmaDiagram1}
\end{figure}

\begin{lemma} \label{lem:1} Consider a segment of length
$L=\frac{25}{3}$ such that three agents $a_1,a_2,a_3$ are patrolling
perpetually each with speed of $5$ and generating an alternating sequence of
uncovered triangles $T_2,T_1,T_2,T_1,\ldots$, as shown in the
position-time diagram in Fig.~\ref{fig:LemmaDiagram1}. Denote the
vertical distances between consecutive occurrences of $T_1$ and $T_2$
by $\delta_{12}$ and between consecutive occurrences of $T_2$ and $T_1$
by $\delta_{21}$. Denote the bases of $T_1$ and $T_2$ by $b_1$ and $b_2$ respectively,
and the heights of $T_1$ and $T_2$ by $h_1$ and $h_2$ respectively. Then
\begin{enumerate} \itemsep 0pt
 \item [{\rm (i)}] $\frac{10}{3}$ is a period of the schedule.
 \item [{\rm (ii)}] $T_1$ and $T_2$ are congruent; further, $b_1 = b_2 = \frac{1}{3}$,
$\delta_{12}=\delta_{21} = \frac{4}{3}$, and $h_1 = h_2 = \frac{5}{6}$.
\end{enumerate}
 \end{lemma}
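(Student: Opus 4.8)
The plan is to reconstruct the geometry of the schedule directly from the stated trajectories and exploit the symmetry that three equal-speed agents impose on a segment of length $L=\frac{25}{3}$. First I would write down explicit piecewise-linear position functions for $a_1,a_2,a_3$: each agent moves back and forth on the whole segment (or on a subsegment) at speed $5$, so each trajectory in the position-time diagram is a zigzag of line segments of slope $\pm 5$. The start positions $0,5,\frac{20}{3}$ are given. The key observation is that, because all three speeds are equal, the trajectories are mutual translates of a single zigzag pattern, shifted in both the position and time coordinates. I would compute the time for one agent to traverse the full fence, $\frac{L}{5}=\frac{5}{3}$, so a single back-and-forth sweep takes $\frac{2L}{5}=\frac{10}{3}$; this is the natural candidate for the period in part (i), and I would verify that each $f_i(t+\frac{10}{3})=f_i(t)$ by checking that the phase offsets of the three agents are all consistent with this period.

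For part (i) the concrete task is to confirm that $\frac{10}{3}$ returns every agent to its starting position and direction. The cleanest route is to note that the offsets of the three start positions, namely $0$, $5$, and $\frac{20}{3}$, differ by multiples of the half-period sweep length once reduced modulo the reflection pattern; I would make this precise by folding the segment (the standard ``unfolding/reflection'' trick that turns a bouncing trajectory of period $\frac{10}{3}$ into a straight line on a doubled interval) and checking periodicity there. Then for part (ii) I would identify, in the diagram, the two families of uncovered triangles $T_1,T_2$ left between the zigzags of adjacent agents. Each uncovered triangle is the region between two oppositely-sloped trajectory edges (slopes $+5$ and $-5$) before the next agent's edge seals it off; its base $b$ (a horizontal, hence spatial, extent) and height $h$ (a vertical, hence temporal, extent) are determined by where these three lines of slope $\pm5$ meet. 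I would set up the three line equations for the bounding trajectory segments near one occurrence of $T_1$ and solve the three pairwise intersections to get its vertices, then read off $b_1,h_1$; repeat for $T_2$.

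The main obstacle I expect is bookkeeping the reflections correctly: because agents bounce at the fence endpoints $0$ and $L$, the local slope of a trajectory flips sign at each endpoint, so the line equations bounding a given triangle depend on which phase of the bounce each agent is in at that moment. Getting the \emph{right} three trajectory segments that form each uncovered triangle (rather than a spurious nearby intersection) is the delicate step, and it is where an off-by-a-reflection error would corrupt $b$ and $h$. Once the correct bounding lines are fixed, congruence of $T_1$ and $T_2$ should follow from the translational symmetry already used in part (i): a shift by a half-period $\frac{5}{3}$ in time together with the appropriate spatial reflection maps the configuration producing $T_1$ onto the one producing $T_2$, forcing $b_1=b_2$, $h_1=h_2$, and likewise $\delta_{12}=\delta_{21}$.

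To finish part (ii) quantitatively, I would solve the intersection equations to obtain the numerical values. With three slope-$\pm5$ lines, a triangle of base $b$ along the position axis has height $h$ satisfying $h=\frac{b}{2}\cdot\frac{1}{5}\cdot(\text{geometry factor})$; carrying out the intersection of lines of slopes $+5$ and $-5$ gives a triangle whose base and height are linked by $h=\frac{b}{2}/5$ scaled appropriately, and plugging in the gap left by the three staggered agents (total span $L=\frac{25}{3}$ covered by phase offsets $5$ and $\frac{20}{3}$) should yield $b_1=b_2=\frac13$ and $h_1=h_2=\frac56$. The vertical separations $\delta_{12},\delta_{21}$ between consecutive triangles of the two types are then half-period minus triangle height type quantities; summing the per-period budget, one expects $\delta_{12}=\delta_{21}=\frac43$, and a consistency check is that $h+\delta$ accounts correctly for the period $\frac{10}{3}$ established in part (i). I would close by verifying these values satisfy that internal consistency relation, which simultaneously confirms the period and the triangle dimensions.
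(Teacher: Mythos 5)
Your plan for part (i) is sound and is essentially the paper's argument: each agent traverses $2L$ in time $2L/5=10/3$, and the phase offsets (the agents reach the left endpoint at times $10/3$, $1$, and $2$) are consistent with that period, so positions and directions at $t=10/3$ repeat those at $t=0$. The genuine gap is in part (ii): you have misidentified what the ``uncovered triangles'' are. They are not regions bounded by three trajectory segments of slope $\pm5$. In this diagram a trajectory segment covers a \emph{parallelogram}, namely the segment together with its upward translate by the target idle time $1$ (the covering reformulation described in the paper's introduction), and the uncovered triangles are what these parallelograms fail to cover. Concretely, each triangle has its base on a fence endpoint (the vertical line $x=0$ or $x=25/3$ in position--time coordinates), one side on an actual trajectory, and one side on a trajectory \emph{shifted up by the idle time} $1$. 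For instance, $T_1$ has vertices $(0,3)$ and $(0,10/3)$ with apex at depth $x=5/6$; the time $3$ arises as $2+1$, i.e., $a_3$'s visit to the left endpoint at time $2$ plus the idle-time offset (this is exactly the step $|BD|=1$ in the paper's proof). If you instead intersect bare trajectories, as you propose, the hole at the left endpoint is bounded by $t=2+x/5$ and $t=10/3-x/5$, which gives base $4/3$ and height $10/3$ --- not the claimed $b=1/3$ and $h=5/6$. So your recipe, carried out literally, cannot produce the lemma's values; the idle-time thickening is the missing ingredient.

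A secondary problem is dimensional bookkeeping: you declare the base spatial and the height temporal, but it is the other way around --- the bases ($=1/3$) are time intervals lying on the endpoint lines, and the heights ($=5/6$) are spatial depths into the segment. Consequently your closing ``consistency check'' that $h+\delta$ accounts for the period mixes a length with times; the correct accounting is $b_1+\delta_{12}+b_2+\delta_{21}=\frac13+\frac43+\frac13+\frac43=\frac{10}{3}$. On the positive side, your symmetry idea for congruence is valid and genuinely different from the paper's argument (which uses parallel sides and equal slopes to get similarity, then congruence): the three-agent schedule is invariant under the glide map $(x,t)\mapsto(25/3-x,\,t+5/3)$, and since the idle-time translation commutes with this map, it carries each occurrence of $T_1$ onto an occurrence of $T_2$, giving $b_1=b_2$, $h_1=h_2$, and $\delta_{12}=\delta_{21}$ all at once --- but you would still need the corrected description of the triangles to compute the actual values.
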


\begin{proof} (i) Observe that $a_1$, $a_2$ and $a_3$ reach the left
endpoint of the segment at times $2 (25/3) /5=10/3$, $5/5=1$, and
$(25/3 + 5/3)/5=2$, respectively. During the time interval $[0,10/3]$,
each agent traverses the distance $2L$ and the positions
and directions of the agents at time $t=10/3$ are the same as those at
time $t=0$. Hence $10/3$ is a period for their schedule.

(ii) Since $AL \parallel BM$ and $AB \parallel LM$, we have $b_1 = b_2$.
Since $L$ is the midpoint of $IP$, we have
$\delta_{12}+b_2=\delta_{21}+b_1$, thus $\delta_{12}=\delta_{21}$.
Since all the agents have same speed, $5$, all the
trajectory line segments in the position-time diagram have the same
slope, $1/5$. Hence $\angle{BAC} = \angle{ABC} = \angle{MLN} = \angle{LMN}$.
Thus, $T_1$ is similar to $T_2$. Since $b_1 = b_2$, $T_1$ is congruent
to $T_2$, and consequently $h_1 = h_2$.

Put $b=b_1$, $h=h_1$, and  $\delta=\delta_{12}$.
Recall from (i) that $|AH|=10/3$. By construction, we have $|BD|=1$, thus
$|BH|=|BD|+|DG|+|GH|=1+1+1=3$. We also have $|AH| = b+ |BH|$, thus
$b=10/3-3=1/3$. Since $L$ is the midpoint of $IP$,
we have $\delta+b=5/3$, thus $\delta=5/3 -b=4/3$.

Let $x(N)$ denote the $x$-coordinate of point $N$; then  $x(N)+h = 25/3$.
To compute $x(N)$ we compute the intersection of the two segments $HL$
and $BM$. We have $H=(0,0)$, $L=(25/3,5/3)$, $B=(0,3)$, and $M=(25/3,4/3)$.
The equations of $HL$ and $BM$ are $HL$~:~$x=5y$ and $BM$~:~$x+5y=15$,
and solving for $x$ yields  $x = 15/2$, and consequently
$h= 25/3-15/2=5/6$.
\end{proof}

\begin{figure}
\centering
\begin{subfigure}{.5\textwidth}
  \centering
  \includegraphics[scale=0.45]{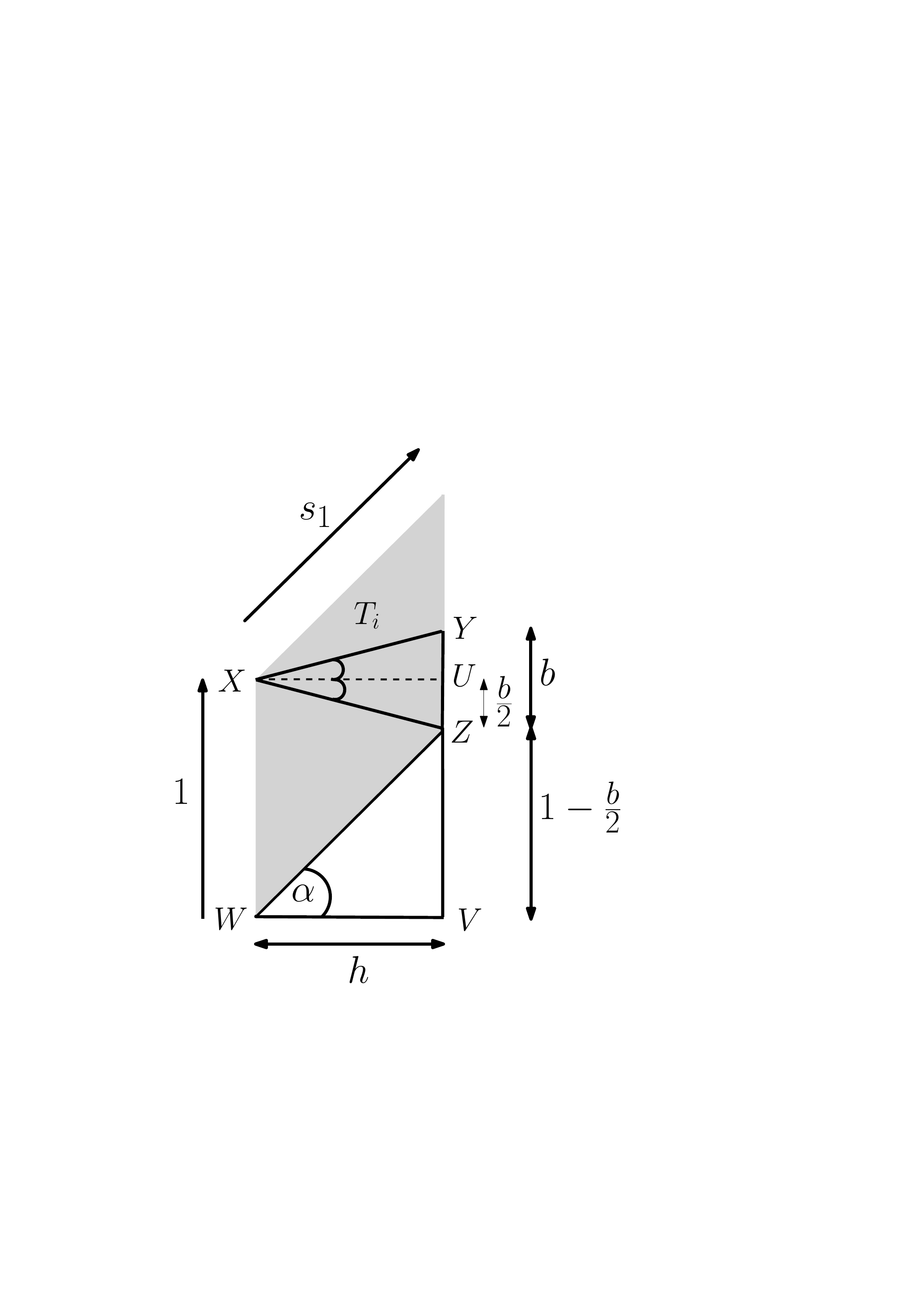}
   \label{fig:sub1}
\end{subfigure}%
\begin{subfigure}{.5\textwidth}
  \centering
  \includegraphics[scale=0.45]{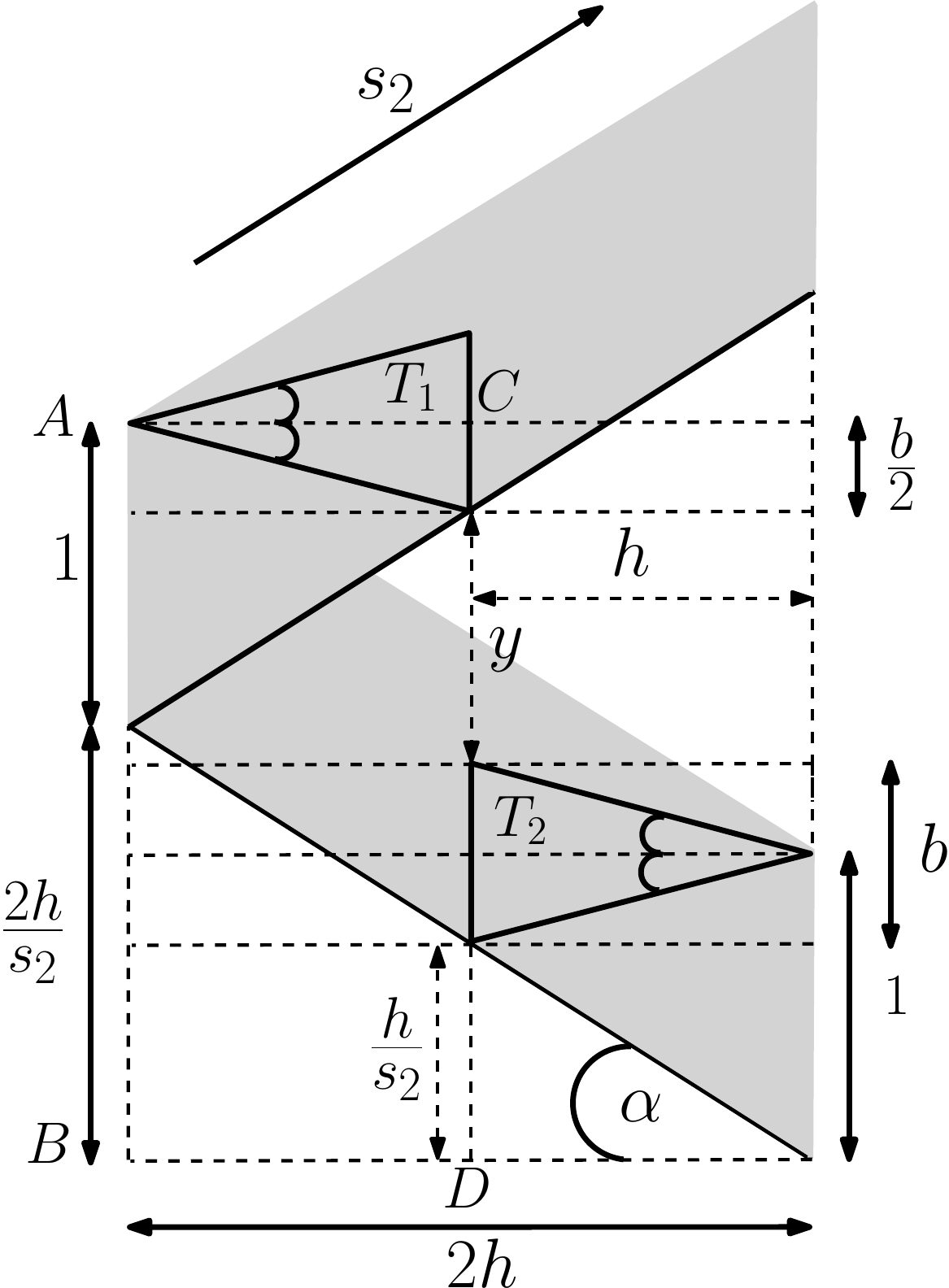}
   \label{fig:sub2}
\end{subfigure}
\caption{Left: agent covering an uncovered triangle $T_i$.
Right: agent covering an alternate sequence of congruent triangles
$T_1,T_2$, with collinear bases.}
\label{fig:LemmaDiagram2}
\end{figure}

\begin{lemma} \label{lem:2}
{\rm (i)} Let $s_1$ be the speed of an agent needed to cover an
uncovered isosceles triangle $T_i$; refer to Fig.~\ref{fig:LemmaDiagram2}~(left).
Then $s_1 = \frac{h}{1-b/2}$,
where $b<1$ and $h$ are the base and height of $T_i$, respectively.

{\rm (ii)} Let $s_2$ be the speed of an agent needed to cover an alternate
sequence of congruent isosceles triangles $T_1,T_2$ with bases on same vertical line;
refer to Fig.~\ref{fig:LemmaDiagram2}~(right).
Then $s_2 = \frac{h}{3b/2+y-1} $ where $y$ is the vertical
distance between the triangles, $b<1$ is the base and $h$ is the height
of the congruent triangles.
\end{lemma}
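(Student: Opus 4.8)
The plan is to translate both statements into the covering picture described in the introduction, where an agent of constant speed $s$ traces a line of slope $1/s$ in the position-time diagram and, for the target idle time (which I normalize to $1$), covers exactly the band of vertical height $1$ lying above its trajectory (the parallelogram of the introduction). With position on the horizontal axis and time on the vertical axis, each $T_i$ is an isosceles triangle whose base is a vertical segment of length $b$ and whose apex lies at horizontal distance $h$ from it; so ``an agent covers $T_i$'' means the swept band(s) contain the whole triangle.

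For part (i) I would first observe that a single constant-speed pass covers a \emph{strip} of width $1$, namely the region between two parallel lines of slope $1/s_1$ at vertical distance $1$. Since both the strip and the triangle are convex, the triangle lies in the strip iff its three vertices do. Placing the base at $x=0$ from $(0,0)$ to $(0,b)$ and the apex at $(h,b/2)$, I would write the three containment inequalities in terms of the strip's offset $g(0)$ and the crossing time $\Delta=h/s_1$, and then maximize $\Delta$ (equivalently minimize the speed) subject to feasibility of $g(0)$. The two base vertices force $g(0)\in[b-1,0]$, an interval that is nonempty \emph{exactly because} $b<1$; combining this with the apex constraint squeezes the system so that the largest admissible crossing time is $\Delta=1-b/2$, attained when the lower edge of the strip passes through the apex and the upper edge through the top of the base. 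Hence $s_1=h/\Delta=h/(1-b/2)$, and by the left-right symmetry of a strip the direction of traversal is irrelevant.

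For part (ii) the two congruent triangles sit at the same horizontal location but are offset in time: their bases lie on the common vertical line $x=0$ with a vertical gap $y$, so the base-bottoms are $b+y$ apart and the upper apex is at time $3b/2+y$. At $x=0$ the two bases together span a time interval of length $2b+y$, which exceeds $1$ whenever the claimed speed is finite and positive, so no single width-$1$ band can contain both triangles; the covering agent must move back and forth, and each triangle is covered by the union of one rising and one falling pass. The key structural observation I would make is that one pass does double duty: a rising pass can simultaneously cover the \emph{base} of the lower triangle (near $x=0$) and the \emph{apex} of the upper triangle (near $x=h$), and symmetrically a falling pass covers the apex of the lower triangle and the base of the upper one. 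Writing the containment conditions for such a shared pass yields two opposing requirements on the half-period $\Delta'=h/s_2$: the band must be tilted enough to reach from a base vertex at time $\approx 0$ up to the opposite apex at time $\approx 3b/2+y$ (forcing $\Delta'\ge 3b/2+y-1$), while the rising and falling bands must meet with no gap inside each triangle (forcing $\Delta'\le 3b/2+y-1$). These pinch $\Delta'$ to exactly $3b/2+y-1$ -- the value at which, after fixing the phase so that a rising pass runs through $(0,0)$, the base vertex $(0,0)$ lies on the lower band edge and the opposite apex $(h,3b/2+y)$ lies on the upper band edge -- giving $s_2=h/(3b/2+y-1)$.

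The routine parts are the coordinate bookkeeping and the two linear feasibility arguments. The main obstacle is the coverage accounting in (ii): one must check that assigning to each pass the base of one triangle and the apex of the next really does tile the entire periodic family without leaving an uncovered sliver, and that the ``reach'' and ``no-gap'' constraints bind at the \emph{same} $\Delta'$, so that the speed is determined exactly rather than over a range. I expect this to reduce, once the phase is chosen, to verifying that the lower and upper band edges of consecutive passes pass through the prescribed triangle vertices, which is a short but careful computation.
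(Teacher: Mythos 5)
Your part~(i) is correct and is essentially the paper's own argument: the binding incidences (lower edge of the strip through the apex, upper edge through the top base vertex) give $h/s_1 = 1-b/2$, which is exactly the slope relation $\frac{1-b/2}{h}=\frac{1}{s_1}$ that the paper reads off its figure.

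Part~(ii), however, has a genuine gap, beginning with the configuration. You place both apexes on the same side, at $x=h$. In the paper the two triangles of the alternating sequence point in \emph{opposite} directions: they are the leftover triangles of two \emph{adjacent} blocks, with bases collinear on the common block boundary and apexes at $x=-h$ and $x=+h$; the covering agent is the one shared by consecutive blocks, and it shuttles between the two apexes, so each leg has length $2h$. (This is visible in Theorem~\ref{thm:24/25}, where these shared agents' legs take $10/6$ time units at unit speed, i.e.\ length $5/3=2h$, not $h$.) The paper's proof is then a tight-figure computation in that geometry: it equates two expressions for one vertical segment at the boundary line, $|AB| = 1+\frac{2h}{s_2}$ (an idle interval plus one apex-to-apex leg) and $|CD| = \frac{b}{2}+y+b+\frac{h}{s_2}$, and solves for $s_2$.

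Beyond the configuration, your covering scheme is internally inconsistent precisely in the regime where the lemma is used. For your falling pass to contain the lower apex (at time $b/2$) in its band, that pass must leave $x=h$ no later than time $b/2$; but your rising pass first reaches $x=h$ only at time $\Delta'=3b/2+y-1$, and $3b/2+y-1>b/2$ exactly when $b+y>1$ (in the application $b+y=5/3$). Putting the falling pass \emph{before} the rising pass does not help: then its band at $x=0$ ends by time $1$, so it cannot contain the top of the upper base at $2b+y>1$ --- a fact you yourself note. Hence the rising and falling passes you assign to a pair of triangles cannot both be passes of a single agent, and your ``no-gap'' bound $\Delta'\le 3b/2+y-1$ is never actually derived. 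It is also not the right bound in general: covering any one triangle already forces $\Delta'\le 1-b/2$ by the part~(i) reasoning, and $1-b/2$ coincides with $3b/2+y-1$ only when $2b+y=2$. That identity happens to hold for the paper's values $b=1/3$, $y=4/3$, which is what makes the lemma's formula consistent with its application but also masks the error in your argument.
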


\begin{proof}
(i) In Fig.~\ref{fig:LemmaDiagram2}~(left), $\tan \alpha = 1/s_1$,
$|UZ| = b/2$, hence $|VZ| = 1-b/2$. Also, $\frac{|VZ|}{|WV|} =
\tan \alpha = \frac{1-b/2}{h} = \frac{1}{s_1}$, which yields $s_1 = \frac{h}{1-b/2}$.

(ii) In Fig.~\ref{fig:LemmaDiagram2}~(right), $|AB| = 1 + \frac{2h}{s_2}$.
Also, $|CD| = \frac{b}{2} + y + b + \frac{h}{s_2}$.
Equating $1 + \frac{2h}{s_2} = \frac{3b}{2} + y + \frac{h}{s_2}$ and
solving for $s_2$, we get $s_2 = \frac{h}{3b/2+y-1}$.
\end{proof}

\begin{figure}[htbp]
    \begin{center}
          \includegraphics[scale=0.6]{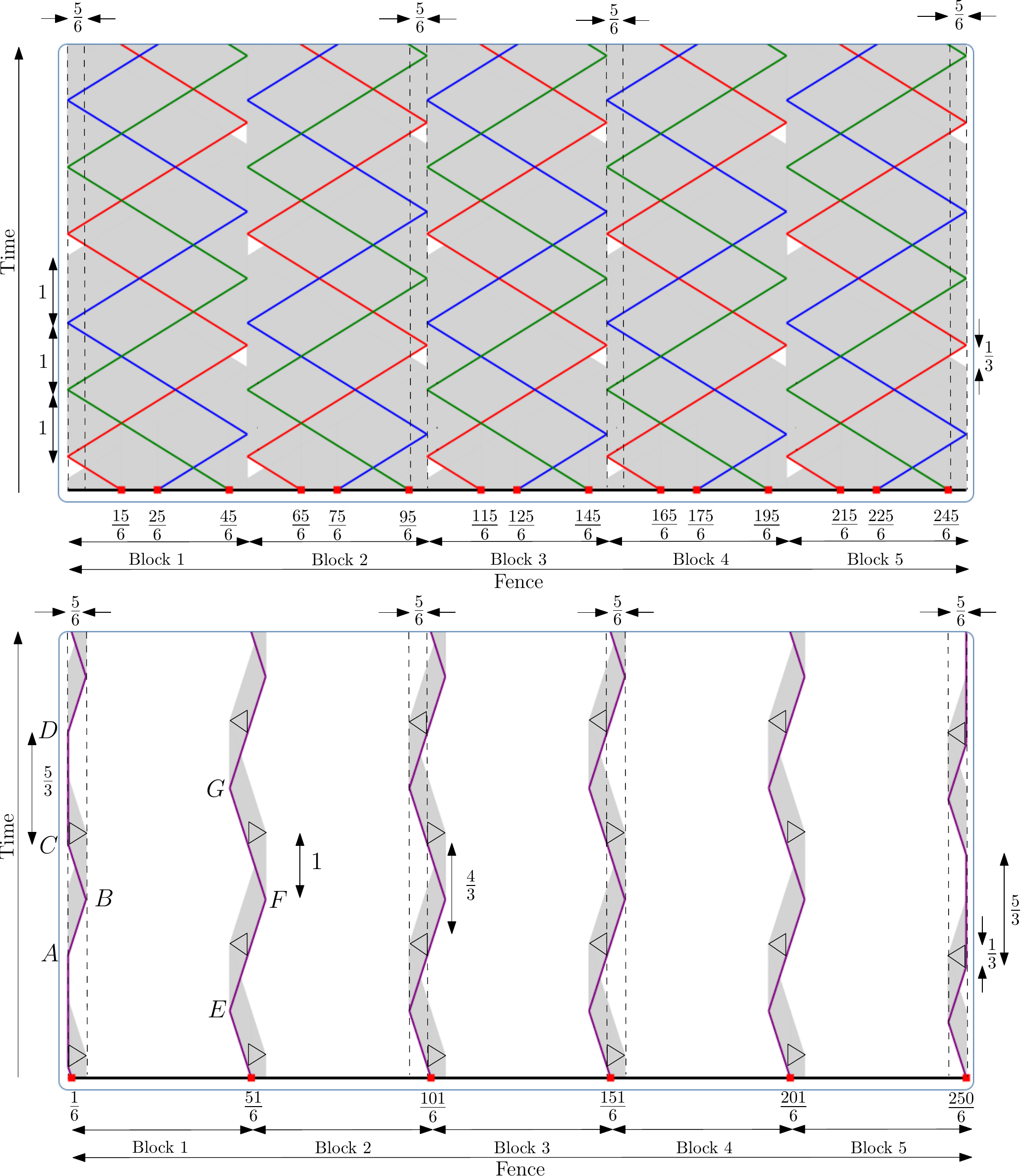}
          \includegraphics[scale=0.6]{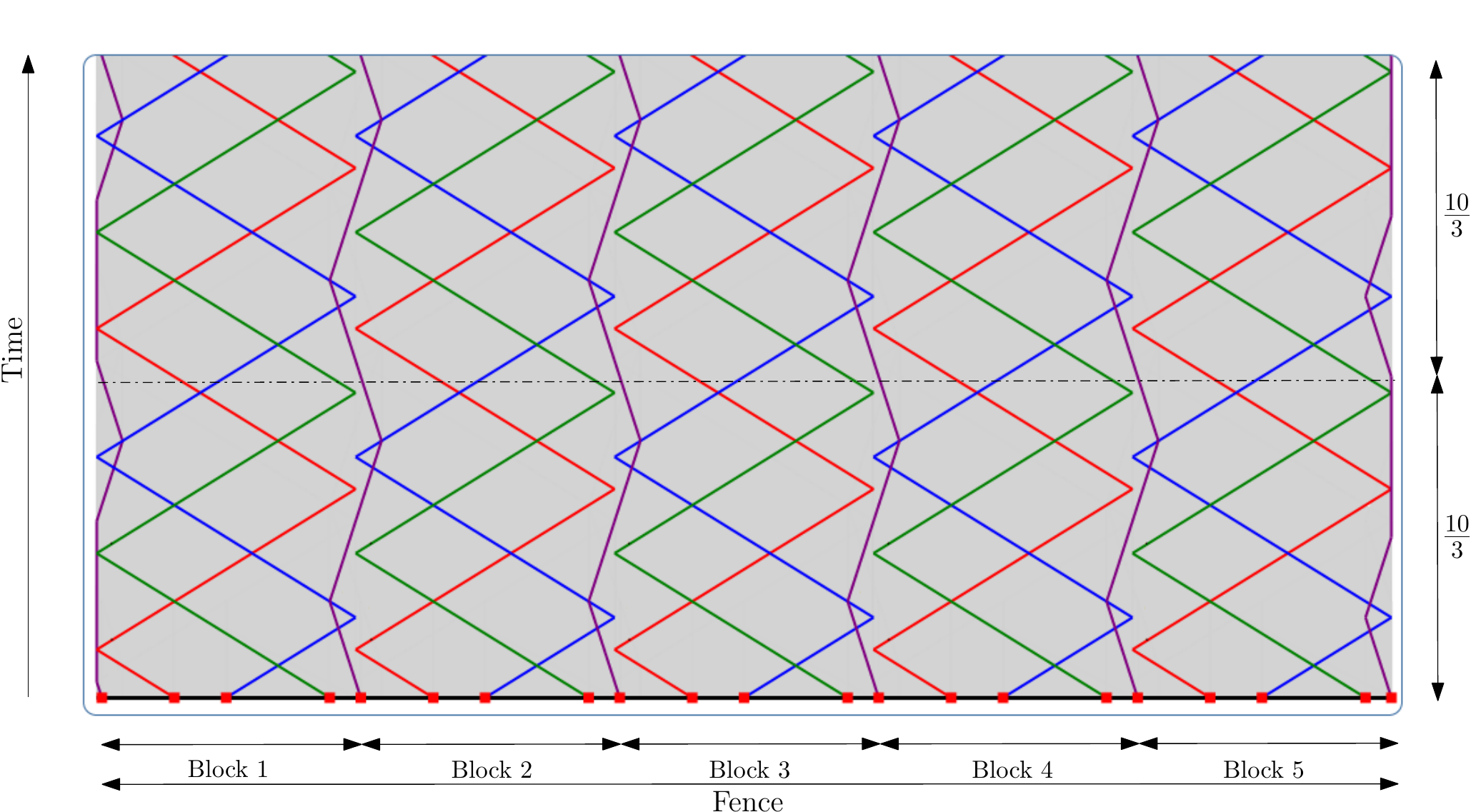}
    \end{center}
\caption{Top: iterative construction with $5$ blocks; each block has
three agents with speed $5$.
Middle: six agents with speed $1$.
Bottom: patrolling strategy for $5$ blocks using $21$ agents for two
time periods (starting at $t=1/3$ relative to
Fig.~\ref{fig:LemmaDiagram1}); the block length is $25/3$ and the
time period is $10/3$.}
\label{fig:24/25}
\end{figure}

\begin{theorem} \label{thm:24/25}
For every integer $x \geq 2$, there exist $k=4x+1$ agents with
$\sum_{i=1}^{k} v_i=16x+1$ and a guarding schedule for a segment of length $25x/3$.
Alternatively, for every integer $x \geq 2$ there exist $k=4x+1$
agents with suitable speeds $v_1,\ldots,v_k$,
and a guarding schedule for a unit segment that achieves idle
time at most $\frac{48x+3}{50x} \, \frac{2}{\sum_{i=1}^{k} v_i}$.
In particular, for every $\eps>0$, there exist $k$ agents
with suitable speeds $v_1,\ldots,v_k$,
and a guarding schedule for a unit segment that achieves idle
time at most
$\left(\frac{24}{25} +\eps\right) \, \frac{2}{\sum_{i=1}^{k} v_i}$.
\end{theorem}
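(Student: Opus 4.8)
The plan is to build the schedule by tiling the long segment with $x$ copies of the three-agent gadget of Lemma~\ref{lem:1} and then patching the leftover gaps with a few slow agents. First I would partition the segment of length $25x/3$ into $x$ consecutive blocks of length $25/3$, and on each block run three agents of speed $5$ exactly as in Lemma~\ref{lem:1}, phased as indicated in Fig.~\ref{fig:24/25}. By Lemma~\ref{lem:1} every block is patrolled with idle time $1$ except for an alternating vertical sequence of congruent isosceles triangles $T_1,T_2$ with base $b=\tfrac13$, height $h=\tfrac56$, and vertical spacing $\delta=\tfrac43$. These $3x$ fast agents contribute $15x$ to the total speed.

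The slow agents are then tailored to these specific triangle dimensions. Indeed, the numbers in Lemma~\ref{lem:1} are exactly those for which Lemma~\ref{lem:2} returns unit speed: substituting $b=\tfrac13$ and $h=\tfrac56$ gives $s_1=\frac{h}{1-b/2}=1$ in part~(i), and substituting in addition $y=\delta=\tfrac43$ gives $s_2=\frac{h}{3b/2+y-1}=1$ in part~(ii). Hence a single agent of speed $1$ can sweep an entire alternating sequence of these triangles within idle time $1$. I would therefore deploy $x+1$ speed-$1$ agents, arranged (as in the middle and bottom of Fig.~\ref{fig:24/25}) so as to cover, within idle time $1$, all the uncovered triangle sequences left by the $x$ blocks. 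Altogether this uses $3x+(x+1)=4x+1=k$ agents of total speed $15x+(x+1)=16x+1$, and yields a guarding schedule (of idle time $1$) for the segment of length $25x/3$, which is the first assertion.

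For the unit-segment version I would rescale space by the factor $3/(25x)$. Idle time is purely temporal and is therefore unchanged at $1$, while every speed is multiplied by $3/(25x)$, so now $\sum_{i=1}^{k}v_i=\frac{3}{25x}(16x+1)=\frac{48x+3}{25x}$ and hence $\frac{2}{\sum_{i=1}^{k}v_i}=\frac{50x}{48x+3}$. The attained idle time $1$ can thus be rewritten as $1=\frac{48x+3}{50x}\cdot\frac{50x}{48x+3}=\frac{48x+3}{50x}\cdot\frac{2}{\sum_{i=1}^{k}v_i}$, giving the second assertion. Finally, since $\frac{48x+3}{50x}=\frac{24}{25}+\frac{3}{50x}\to\frac{24}{25}$ as $x\to\infty$, for any $\eps>0$ it suffices to take $x\geq 3/(50\eps)$ to obtain idle time at most $\left(\frac{24}{25}+\eps\right)\frac{2}{\sum_{i=1}^{k}v_i}$, completing the proof.

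The agent count and the one-line rescaling are routine; the crux — and the step I expect to be the main obstacle — is the geometric verification that, after tiling and phasing the blocks, the global uncovered region decomposes into precisely $x+1$ alternating triangle sequences of the form handled by Lemma~\ref{lem:2}, with no new gaps appearing at the block interfaces and with the two outer ends correctly accounted for. Checking that the chosen phase offset aligns the bases of triangles from neighboring blocks onto common vertical lines, so that Lemma~\ref{lem:2}(ii) genuinely applies, is where the position-time diagram must be examined with care.
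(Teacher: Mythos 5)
Your proposal follows essentially the same route as the paper's own proof: tiling the fence with $x$ three-agent blocks from Lemma~\ref{lem:1}, patching the uncovered triangles with $x+1$ unit-speed agents whose speeds are dictated by Lemma~\ref{lem:2} (with $x-1$ of them shared between consecutive blocks and two at the ends), then counting $4x+1$ agents of total speed $16x+1$, rescaling to a unit segment, and letting $x\to\infty$. The geometric verification you flag as the crux is handled in the paper at the same level of detail---by checking that all three agent types have common period $10/3$ and appealing to the position-time diagram of Fig.~\ref{fig:24/25}---so your argument is complete to the same standard.
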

\begin{proof} Refer to Fig.~\ref{fig:24/25}.
We use a long fence divided into $x$ blocks; each block is of length $25/3$.
Each block has 3 agents each of speed 5 running in zig-zag fashion.
Consecutive blocks share one agent of speed $1$ which covers the
uncovered triangles from the trajectories of the zig-zag agents in the
position-time diagram. The first and the last block use two agents of
speed $1$ not shared by any other block. The setting of these speeds
is explained below.

From Lemma~\ref{lem:1}(ii), we conclude that all the uncovered triangles generated
by the agents of speed 5 are congruent and their base is $b = 1/3$ and
their height is $h =5/6$. By Lemma~\ref{lem:2}(i), we can set the speeds of
the agents not shared by consecutive blocks to $s_1 = \frac{5/6}{1-1/6} = 1$.
Also, in our strategy, Lemma~\ref{lem:1}(ii) yields $y = \delta = 4/3$.
Hence, by Lemma~\ref{lem:2}(ii), we can set the speeds of the agents
shared by consecutive blocks to $s_2 = \frac{5/6}{1/2 + 4/3 -1} = 1$.

In our strategy, we have 3 types of agents:
agents running with speed~5 as in Fig.~\ref{fig:24/25}~(top),
unit speed agents not shared by 2 consecutive blocks and
unit speed agents shared by two consecutive blocks as in
Fig.~\ref{fig:24/25}~(middle).
By Lemma~\ref{lem:1}(i), the agents of first type have period $10/3$.
In Fig.~\ref{fig:24/25}~(middle), there
are two agents of second type and both have a similar trajectory.
Thus, it is enough to verify for the leftmost unit speed agent.
It takes $5/6$ time from $A$ to $B$ and again $5/6$ time from
$B$ to $C$. Next, it waits for $5/3$ time at $C$. Hence after
$5/6+5/6+5/3 = 10/3$ time, its position and direction at $D$ is same as
that at $A$. Hence, its time period is $10/3$.
For the agents of third type, refer to Fig.~\ref{fig:24/25}~(middle): it
takes $10/6$ time from $E$ to $F$
and $10/6$ time from $F$ to $G$. Thus, arguing as above, its time period is
$10/3$. Hence, overall, the time period of the strategy is $10/3$.

For $x$ blocks, we use $3x+(x+1) = 4x+1$ agents.
The sum of all speeds is $5(3x) + 1(x+1) = 16x+1$ and the total fence
length is $\frac{25x}{3}$. The resulting ratio is
$\rho = \frac{16x+1}{2} / \frac{25x}{3} = \frac{48x+3}{50x}$. For
example, when $x=2$ we reobtain the bound of Kawamura and
Kobayashi~\cite{KK12} (from their 2nd example), when $x=39, \rho =
\frac{100}{104}$ and further on,
$\rho \underset{x \to \infty}{\longrightarrow} \, \frac{24}{25}$.
Thus an idle time of at most
$\left(\frac{24}{25} +\eps\right) \, \frac{2}{\sum_{i=1}^{k} v_i}$
can be achieved for every given $\eps>0$, as required.
\end{proof}

\paragraph{Acknowledgements.}
We sincerely thank Akitoshi Kawamura for
generously sharing some technical details concerning their patrolling algorithms.
We also express our satisfaction with the software package
\emph{JSXGraph, Dynamic Mathematics with JavaScript}, used in our experiments.
%\url{http://jsxgraph.uni-bayreuth.de/wp/}.


\begin{thebibliography}{9}
\itemsep 2pt

\bibitem{AKK08} N.~Agmon, S.~Kraus, and G.~A. Kaminka,
Multi-robot perimeter patrol in adversarial settings,
in \emph{Proc. International Conference on Robotics and Automation (ICRA 2008)}, 
IEEE, 2008, pp.~2339--2345.
%doi: 10.1109/ROBOT.2008.4543563

\bibitem{BS08}
J. Barajas and O. Serra,
The lonely runner with seven runners, 
\emph{Electronic Journal of Combinatorics} \textbf{15} (2008), R48.

\bibitem{BHK01}
T.~Bohman, R.~Holzman, and D.~Kleitman,
Six lonely runners, 
\emph{Electronic Journal of Combinatorics} \textbf{8} (2001), R3.

\bibitem{Ch04} Y. Chevaleyre,
Theoretical analysis of the multi-agent patrolling problem,
in \emph{Proc. International Conference on Intelligent Agent Technology (IAT 2004)},
IEEE, 2004, pp.~302--308. 
%doi: 10.1109/IAT.2004.1342959

\bibitem{Ch01} H. Choset,
Coverage for robotics---a survey of recent results,
\emph{Annals of Mathematics and Artificial Intelligence} 
\textbf{31} (2001), 113--126.

\bibitem{Cu73}  T.~W. Cusick, 
View-obstruction problems, 
\emph{Aequationes Math.} \textbf{9} (1973), 165--170.

\bibitem{CGKK11}
J. Czyzowicz, L. Gasieniec, A. Kosowski, and E. Kranakis,
Boundary patrolling by mobile agents with distinct maximal speeds,
in \emph{Proc. 19th European Symposium on Algorithms (ESA 2011)},
LNCS 6942, Springer, 2011, pp.~701--712.

\bibitem{EAK07} Y. Elmaliach, N. Agmon, and G. A. Kaminka,
Multi-robot area patrol under frequency constraints,
in \emph{Proc. International Conference on Robotics and Automation (ICRA 2007)}, 
IEEE, 2007, pp.~385--390. 
%doi: 10.1109/ROBOT.2007.363817

\bibitem{KK12} A. Kawamura and Y. Kobayashi,
Fence patrolling by mobile agents with distinct speeds,
in \emph{Proc. 23rd International Symposium on Algorithms and Computation (ISAAC 2012)},
LNCS 7676, Springer, 2012, pp.~598--608.

\bibitem{MRZD02}
A.~Machado, G.~Ramalho, J.~D.~Zucker, and A.~Drogoul, 
Multi-agent patrolling: an empirical analysis of alternative architectures,
\emph{3rd International Workshop on Multi-Agent-Based Simulation},
Springer, 2002, pp.~155--170.

\bibitem{Wi67} J.~M. Wills, 
Zwei S\"atze \"uber inhomogene diophantische Approximation von Irrationalzahlen,
\emph{Monatsch. Math.} \textbf{71} (1967), 263--269.

\end{thebibliography}
\end{document}